\documentclass[aps,pra,preprint,longbibliography,nofootinbib]{revtex4-2}

\makeatletter
\renewcommand{\p@subsection}{}
\renewcommand{\p@subsubsection}{}
\makeatother

\usepackage[utf8]{inputenc}
\usepackage[T1]{fontenc}
\usepackage{amsmath, amsfonts, amssymb, amsthm,mathtools}
\usepackage{booktabs}
\usepackage{graphicx}
\usepackage{hyperref}
\usepackage{mathtools}
\usepackage{physics}
\usepackage{enumitem}
\usepackage{listings}
\usepackage{xcolor}
\hypersetup{colorlinks=true, urlcolor=blue}
\usepackage[english]{babel}
\usepackage{hyperref}

\theoremstyle{plain}
\newtheorem{theorem}{Theorem}
\newtheorem*{theorem*}{Theorem}

\newtheorem{corollary}{Corollary}
\theoremstyle{definition}
\newtheorem{definition}{Definition}
\theoremstyle{remark}
\newtheorem{remark}{Remark}

\newcommand{\G}[0]{\mathcal{G}}

\newcommand{\I}[0]{\mathcal{I}}
\newcommand{\J}[0]{\mathcal{J}}
\newcommand{\K}[0]{\mathcal{K}}
\renewcommand{\L}[0]{\mathcal{L}}

\newcommand{\Eb}[0]{\mathbb{E}}

\newcommand{\Ib}[0]{\mathbb{I}}

\newcommand{\Rb}[0]{\mathbb{R}}

\renewcommand{\dd}{\mathrm{d}}

\begin{document}

\title{Formalization of the generalized Pareto principle and structural typicality of the 20/80-rule}
\author{Antti Hippel\"ainen}
\email{antti.hippelainen@helsinki.fi}
\affiliation{Department of Physics and Helsinki Institute of Physics,
P.O. Box 64, FI-00014, University of Helsinki, Finland}

\begin{abstract}
We formalize a generalized form of the Pareto principle --- ``fraction $p$ of inputs yields fraction $1-p$ of outputs'' --- as a property of non-negative gain densities $\ell \in L^1([0,1])$, working with the decreasing rearrangement to obtain a unique characterization. For probability distributions, the resulting $p$ coincides with $1 - k_F$, where $k_F$ is the Kolkata index of the corresponding Lorenz curve. Within this framework we analyze both constructed gain densities and commonly encountered distribution families. We derive closed-form expressions for $p$ for truncated power-law, exponential, and normal distribution families. Combining these with estimates of the truncation parameter as a function of sample size $N$, we predict that datasets of size $N \in [10^2, 10^5]$ from exponential and normal families concentrate $p$ near $[0.15, 0.26]$ and $[0.20, 0.29]$ --- values close to the canonical 0.2/0.8-rule, and strictly below the saturation $k \approx 0.865$ conjectured earlier by Ghosh and Chakrabarti. We discuss the implications of the structural ubiquity of Pareto-type imbalances for their use as
prescriptive targets.
\end{abstract}

\maketitle
\clearpage

\section{Introduction}
\label{sec:intro}

The so-called Pareto principle or ``20/80--rule'' is among the most widely quoted heuristics in economics, management, and cognitive science. It states that 20\% of causes result in 80\% of effects. Originally formulated by Vilfredo Pareto \cite{Pareto:1897}, it was an empirical observation about the distribution of wealth, later generalized to domains as diverse as innovation output, learning efficiency, online participation, frequency of word use, scientific productivity, and urban population size \cite{Nielsen:2006, Zipf:1950, Lotka:1926, Merton:1968, Newman:2005}.

The principle is often used in two roles. Descriptively, it reflects the concentration of effort and reward. Prescriptively, it underlies rules of thumb such as ``focus on the vital few''. For example, cognitive agents reduce complexity by focusing on the few high-impact variables that account for most effects by what Herbert Simon called bounded rationality \cite{Simon:1955}. Such heuristics can be useful if Pareto-like behavior is present.

Given such prominence, it is natural to ask how special and likely the Pareto principle really is. Is a 20/80-balance a contingent empirical regularity or a purely structural feature? If such asymmetries are unavoidable, which specific balances can be expected, and what is the meaning of the principle?

In this work, we address these questions by treating generalized Pareto-type statements as properties of abstract gain distributions. We formalize bounded cumulative processes and illustrate the framework with explicit gain densities that capture both specially constructed cases and common distribution families. The constructed cases clarify which technical restrictions are needed and allow us to reason about the minimum level of inequality required to satisfy a given generalized principle. For distribution families, we express the generalized principle in terms of a small number of parameters and identify which forms of the principle can hold; for some common distributions, we cannot realize arbitrary generalized Pareto principles, highlighting that the set of attainable imbalances depends subtly on the underlying distributional form.

Combining these results with an estimate for typical dataset sizes and parameter estimates, we show that generalized principles numerically close to 20/80 arise quite naturally for phenomena following exponential- and normal-distributions, while heavy-tailed distributions like power-laws tend to produce smaller values of $p$. Taken together, these results support an interpretation of Pareto-type imbalances as structural features of gain distributions rather than as rare empirical curiosities. At the same time, they suggest caution in using the canonical 20/80 rule as a universal yardstick or as a sole justification of inequality. 

The rest of the article is organized as follows. In Section \ref{sec:formalization}, we set the mathematical framework. Section \ref{sec:examplesexistence} applies this framework to concrete density profiles and studies the existence of the generalized principle. Section \ref{sec:discussion} studies commonly encountered generalized principles and discusses the social connections. Clarifying discussions and longer proofs are relegated to appendices \ref{app:limitationsoff} and \ref{app:rearrangementsproof}. We conclude in Section \ref{sec:conclusion}.

\section{Formalization}
\label{sec:formalization}

We model bounded cumulative processes with a non-negative gain density
\begin{equation}
\label{eq:gaindensity}
    \ell : I_t \to [0,\infty), \qquad  \int_{I_t} \dd t \ \ell(t) = 1 \ ,
\end{equation}
where $I_t = [t_\text{min},t_\text{max}]$ is a compact (closed and bounded) interval. Since total gains are assumed to be finite, we require $\ell \in L^1\qty(I_t)$, and that $\ell$ is normalized to unity. In fact, if either the domain of input or the total output were not finite, the (generalized) Pareto principle would be meaningless.

Since we are interested in a question of ratios of gains and $I_t$ is compact, we assume without loss of generality that $I_t = [0,1]$. If the underlying distribution is ill-defined at some value of the original variable (\textit{e.g.} a power-law at $t = 0$), we use $t$ only as a parameter in an affine map; this will be made explicit in Section \ref{sec:examplesexistence}.

The cumulative gain function corresponding to $\ell$ is
\begin{equation}
    \L : [0,1] \to [0,1], \qquad  \L(t) = \int_{0}^t \dd t' \ \ell(t') \ .
\end{equation}
By definition $\L(0) = 0$, $\L(1) = 1$, and $\L$ is (absolutely) continuous:
\begin{equation}
    \left|\L(t) - \L(s) \right| = \left| \int_s^t \dd t' \ \ell(t') \right| \leq \int_s^t \dd t' \ \left| \ell(t') \right| \longrightarrow_{s \to t} 0 \ .
\end{equation}
These definitions mirror those of probability density functions and cumulative distribution functions, but other forms of $\ell$ that do not permit a probabilistic interpretation could in principle exist, and are discussed further in Appendix \ref{app:limitationsoff}. Basic facts of probability and measure theory will still be useful in what follows.

\subsection{Preliminary formulation of the generalized principle}
Tentatively, we let the generalized Pareto principle mean, that a fraction $p$ of causes/inputs/efforts results in a fraction $1-p$ of effects/outputs/gains. Mathematically, this translates to the \emph{preliminary} formulation
\begin{equation}
\label{eq:generalprincipleunion}
\int_{\cup_{j \in \J} I_j} \dd t \ \ell(t) = 1-p = \sum_{j \in \J} \qty(\L(b_j) - \L(a_j))\ ,
\end{equation}
where $\{I_j\}_{j \in \J}$ is a countable collection of disjoint intervals such that for all $j, \ I_j = [a_j, b_j]\subset [0,1]$, and the total length of the union is fraction $p$: $\mu\qty(\cup_{j \in \J} I_j)= p$. By symmetry, we may restrict without loss of generality to $p \in (0, 1/2]$. Whenever referring to a generalized principle with given $p$, this $p$ refers to the first value in $p/(1-p)$. We refer to ``fraction $p$ of inputs yields fraction $1-p$ of outputs'' as the $p/(1-p)$-principle, the generalized (Pareto) principle, or later just ``the principle''. 

Sometimes, the generalized principle (which in our definition would rather be an asymmetric generalized principle) is rather used to mean ``fraction $p$ of inputs yields fraction $1 - k p$ of outputs'', where $k > 0$ is a given ``Pareto parameter''. Later on we will focus only on the symmetric case with $k = 1$, but it is equally easy to keep $k$ free in the following existence property:
\begin{remark}
\label{rmrk:remark1}
    For any continuous gain function $\L : [0,1] \to [0,1] \text{ with } \L(0) = 0 \text{ and } \L(1) = 1$, and for all $k \in \Rb^+$, define $\G(t) = \L(t) - 1 + kt$. Since $\G(0) = -1$ and $\G(1) = k > 0$, the intermediate value theorem guarantees the existence of a $p^* \in (0,1)$ with $\L(p^*) = 1 - k p^*$. Hence, some form of the asymmetric generalized $p/(1 - kp)$-principle is satisfied. In particular, with $k = 1$, there exists $p^*$ such that $\L(p^*) = 1- p^*$ for some $p^* \in (0,1)$. The substantive question becomes which $p^*$ arises, not whether one exists.
\end{remark}

The existence guaranteed by the above remark is not new: an equivalent fixed-point condition has been studied in the context of economic inequality under the name Kolkata index with $k = 1$ \cite{Ghosh:2014,Banerjee:2019,Banerjee:2020,Ghosh:2021}. The index has been defined as the unique solution to the equation $L(k_F) + k_F = 1$, where $L$ is the Lorenz curve. Effectively, our formulation agrees with the definition of the Kolkata index on probability densities, and extends the definition to $L^1$-integrable densities with less stringent assumptions (see Appendix~\ref{app:limitationsoff} and Section~\ref{sec:conclusion} for discussions of generalizations). Acknowledging this, we shall retain the terminology of a generalized Pareto principle throughout, both to emphasize the input/output framing that doesn't necessarily have anything to do with inequality, and to cover the possibilities of generalization. Readers familiar with the Kolkata index may read $p^*$ as $1 - k_F$ for most purposes.

Also, in \cite{Ghosh:2021} the existence of a general social constant was conjectured based on the study of the Kolkata index across multiple datasets, and the one-parameter family of Lorenz-curves with $L(x) = x^n$ \cite{Ghosh:2021}. The present work extends the study of the existence of such a social constant by studying a variety of distributional families with more parameters. We also derive rigorous bounds on the possible generalized Pareto principles satisfied. As we will see, the results from these derivations are often below the previously-conjecture bound of $0.86$.

Since satisfying the (asymmetric) generalized principle only requires $\L$ to be continuous, we have allowed $\ell \in L^1([0,1])$, and so under very mild assumptions a bounded cumulative process satisfies the generalized Pareto principle for some $p$; by \ref{rmrk:remark1}, the question becomes, can we satisfy a \textit{specific} form of the generalized principle with a \textit{specific} choice of $k$? Since any $k$ is as good as any other, we fix $k = 1$ moving forward.

The preliminary naive definition \eqref{eq:generalprincipleunion} allows us to choose an arbitrary countable disjoint union of intervals of total length $p$. However, this definition turns out to be too permissive and trivializes the existence of the generalized principle. Before the proof, intuitively, definition \eqref{eq:generalprincipleunion} allows us to choose any regions of values of $\ell$ to contribute to the generalized principle. Even if choosing specific parts of a distribution may seem arbitrary, this would allow us to pick the most unequal combinations of inputs/outputs from the original distribution, which is at the heart of studying the imbalance predicted by the (generalized) Pareto principle. We formalize this intuition with the construction of a decreasing rearrangement for the gain density, resulting in the following theorem, the proof of which can be found from Appendix~\ref{app:rearrangementsproof}:
\begin{theorem}
\label{thm:theorem2}
        For a gain density $\ell \in L^1([0,1])$, allowing the generalized Pareto principle to be satisfied over any countable disjoint union of intervals $\bigcup_{j \in \J} I_j, \ \forall \ j \ I_j \subset [0,1]$, is equivalent to fixing an interval of integration of length $\sum_j \mu(I_j)$ and allowing arbitrary rearrangements of $\ell$ into some $\ell^*$, which is equimeasurable with $\ell$. This also results in the rearranged gain function $\L^*$. \\

    \noindent With such rearrangements allowed, the minimal $p$ for satisfying the generalized principle is obtained by integrating $\ell^*$ on an interval $[0,p^*]$ such that $\L^*(p^*) = 1- p^*$. By symmetry, the maximal value of $p$ satisfying the generalized principle is $1 - p^*$. \\

    \noindent Then, every generalized Pareto principle between $p^*$ and $(1-p^*)$ can be satisfied by some rearrangement.
\end{theorem}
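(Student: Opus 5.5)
The proof will go through the decreasing rearrangement $\ell^*(t) = \inf\{s \ge 0 : \mu(\{\ell > s\}) \le t\}$, which is non-increasing, right-continuous and equimeasurable with $\ell$. It therefore lies in $L^1([0,1])$ with the same total integral $1$. The plan has three steps, matching the three parts of Theorem~\ref{thm:theorem2}.

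\textbf{Step 1: the equivalence.}
Let $E = \bigcup_{j\in\J} I_j$ with $\mu(E) = q$. I will build a measure-preserving rearrangement $\sigma$ of $[0,1]$ that maps $E$ onto $[0,q]$ and $[0,1]\setminus E$ onto $[q,1]$. Concretely, take $\sigma(t) = \mu(E \cap [0,t])$ on $E$ and $\sigma(t) = q + \mu([0,t]\setminus E)$ off $E$. This map is measure preserving up to null sets, so $\tilde\ell = \ell\circ\sigma^{-1}$ is equimeasurable with $\ell$ and satisfies
\begin{equation}
\int_E \dd t\ \ell(t) = \int_0^q \dd t\ \tilde\ell(t).
\end{equation}
For the converse, the $\tilde\ell$ need not literally come from such a map. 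Instead I will use the layer-cake formula: integrals of equimeasurable functions over sets of measure $q$ take values in $\bigl[\int_{1-q}^1 \ell^*, \int_0^q \ell^*\bigr]$. The question is whether each such value is attained by some countable union of intervals. Since $\ell$ is only $L^1$, this needs approximation. By inner and outer regularity, a measurable set $A$ of measure $q$ can be approximated by a finite union of intervals $U$ with $\mu(A\triangle U)$ small. Absolute continuity of the integral then lets me adjust $U$ by tiny intervals, using continuity of $\L$ on each piece, to hit exactly measure $q$ and exactly the required integral.

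\textbf{Step 2: extremality of $p^*$.}
By the Hardy--Littlewood inequality, $\int_A \ell \le \int_0^{\mu(A)} \ell^* = \L^*(\mu(A))$. Hence any set of measure $q$ satisfying the principle needs $1-q \le \L^*(q)$. The function $\G^*(q) = \L^*(q) - 1 + q$ is continuous and strictly increasing, because $\L^*$ is non-decreasing and $q$ is strictly increasing. By Remark~\ref{rmrk:remark1} it has a unique zero $p^*$, so $q \ge p^*$ is necessary, and $q = p^*$ is attained by $[0,p^*]$. For the maximal value I apply the same argument to the complement, or equivalently to the increasing rearrangement $\ell^*(1-t)$. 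The minimal integral over sets of measure $q$ is $1 - \L^*(1-q)$. Requiring $1 - \L^*(1-q) \le 1-q$ gives $1-q \ge p^*$, so the maximum is $1-p^*$.

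\textbf{Step 3: intermediate values.}
Fix $q \in [p^*, 1-p^*]$. The map $s \mapsto \int_s^{s+q} \ell^*$ for $s \in [0,1-q]$ is continuous, again by absolute continuity of $\L^*$. At $s=0$ it equals $\L^*(q) \ge 1-q$. At $s = 1-q$ it equals $1 - \L^*(1-q) \le 1-q$. The intermediate value theorem then gives an interval $[s, s+q]$ with integral exactly $1-q$. Pulling back through Step 1 yields a countable union of intervals for the original $\ell$, or equivalently a rearrangement (a shift of $\ell^*$) that is integrated over $[0,q]$.

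\textbf{Main obstacle.}
I expect the converse direction of Step 1 to be the hardest part: showing that an arbitrary measurable level set can be replaced by a countable union of closed intervals with exactly the same measure and exactly the same integral. Closed intervals in the statement are harmless, since endpoints are null. The delicate point is the exact matching, and I plan to handle it with the regularity-plus-continuity adjustment described in Step 1.
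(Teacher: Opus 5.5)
\textbf{Verdict: there is a genuine gap.} It sits in the converse half of Step 1, where you expected trouble, and your planned repair cannot close it.

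\textbf{What matches the paper.} Your Steps 2 and 3 are the paper's argument: the Hardy--Littlewood bound, strict monotonicity of $\L^*(q)-1+q$, and the sliding window $s\mapsto\int_s^{s+q}\ell^*$ with the intermediate value theorem. The forward half of Step 1 is the paper's measure-preserving bijection, made explicit.

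\textbf{Why the converse fails.} Approximating by finite unions of intervals and then adjusting is only guaranteed to reach values \emph{strictly} inside $\bigl(\int_{1-q}^1\ell^*,\int_0^q\ell^*\bigr)$, because the adjustment needs slack in both directions. At the endpoints, the only sets of measure $q$ attaining $\int_0^q\ell^*$ are (a.e.) super-level sets of $\ell$, and these need not be countable unions of intervals. The endpoints are exactly the cases $q=p^*$ and $q=1-p^*$. Step 2 needs the first of these to claim that the minimum is attained for the original $\ell$.

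\textbf{Counterexample.} Let $C\subset[0,1]$ be a closed nowhere dense set of measure $q_0\in(0,1/2)$ (a fat Cantor set), and set $\ell=\frac{1-q_0}{q_0}\mathbf{1}_C+\frac{q_0}{1-q_0}\mathbf{1}_{[0,1]\setminus C}$.
\begin{itemize}
\item This $\ell$ is strictly positive and normalized, so it even respects the no-padding requirement.
\item Its rearrangement $\ell^*$ is the paper's step function $\ell_{q_0}$, so $p^*=q_0$.
\item Suppose $E=\bigcup_j I_j$ has $\mu(E)=q_0$ and $\int_E\ell=1-q_0$. With $x=\mu(E\cap C)$ this reads $\frac{1-q_0}{q_0}x+\frac{q_0}{1-q_0}(q_0-x)=1-q_0$, which forces $x=q_0$, i.e.\ $\mu(E\setminus C)=0$.
\item Every nondegenerate interval meets the open dense set $[0,1]\setminus C$ in positive measure, so $E$ would have to be null. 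This is a contradiction.
\end{itemize}
So under the literal interval-union reading, $p^*$ is only an infimum, and no exact-matching argument can exist.

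\textbf{What survives.} For $q\in(p^*,1-p^*)$ you do have the strict inequalities $\L^*(q)>1-q>1-\L^*(1-q)$, so your approximation idea works there. One way to make it rigorous is a path of finite interval unions of measure exactly $q$, continuous in $\mu(A\triangle B)$, joining a near-maximizer to a near-minimizer.

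\textbf{How the paper avoids this, and a clean fix.} The paper sidesteps the issue by letting $\I_p$ be \emph{all measurable} sets of measure $p$; its proof works with measurable sets throughout and only remarks that they can be approximated by open sets. If you adopt that reading, the converse becomes clean and covers the endpoints. By Ryff's theorem, $\ell=\ell^*\circ\sigma$ for some measure-preserving $\sigma$. Then $A_s=\sigma^{-1}([s,s+q])$ has $\mu(A_s)=q$ and $\int_{A_s}\ell=\int_s^{s+q}\ell^*$, which pulls back every window from your Step 3. Since any equimeasurable $\tilde\ell$ has $\int_0^q\tilde\ell$ inside that range, this also settles your correct observation that $\tilde\ell$ need not be $\ell\circ\sigma^{-1}$ for a bijection, a point the paper's converse glosses over. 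Otherwise, keep intervals and weaken the claim in Step 2 to an infimum.
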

From the above immediately follows:
\begin{corollary}
    Under the preliminary definition \eqref{eq:generalprincipleunion}, a given distribution satisfies the exact $0.2/0.8$-principle if and only if its decreasing rearrangement satisfies some $p/(1-p)$-principle with $p \leq 0.2$.
\end{corollary}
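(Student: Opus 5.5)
The corollary follows from Theorem~\ref{thm:theorem2} once we show that the set of $p$ attainable under \eqref{eq:generalprincipleunion} is exactly the closed interval $[p^*,1-p^*]$, where $p^*$ is defined by the decreasing rearrangement through $\L^*(p^*)=1-p^*$. The exact $0.2/0.8$-principle is then attainable precisely when $0.2\in[p^*,1-p^*]$. The plan is to prove the two implications separately and then rewrite this condition in terms of the rearrangement.

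\emph{Necessity.} Suppose some countable disjoint union $U=\bigcup_j I_j$ with $\mu(U)=0.2$ satisfies $\int_U \ell = 0.8$. The first part of Theorem~\ref{thm:theorem2} says that $p^*$ is the minimal $p$ for which any such union can satisfy the principle, so $p^*\le 0.2$. It is worth recording the mechanism behind this, because it is the same fact used below. By the Hardy--Littlewood property of the decreasing rearrangement, $\int_U \ell\le \L^*(\mu(U))$, which gives $\L^*(0.2)\ge 0.8$. The function $\G^*(t)=\L^*(t)-1+t$ is nondecreasing, and strictly increasing, since $\L^*$ is nondecreasing. Hence $\G^*(0.2)\ge 0=\G^*(p^*)$ forces $p^*\le 0.2$. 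The decreasing rearrangement therefore satisfies the $p^*/(1-p^*)$-principle with $p^*\le 0.2$.

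\emph{Sufficiency.} Suppose the decreasing rearrangement satisfies some $p/(1-p)$-principle with $p\le 0.2$. By the minimality statement of Theorem~\ref{thm:theorem2}, $p^*\le p\le 0.2$. Since we restrict to $p\le 1/2$, we also have $0.2<0.8\le 1-p^*$, so $0.2\in[p^*,1-p^*]$. The last clause of Theorem~\ref{thm:theorem2} then supplies a rearrangement, and hence a countable union of intervals of total length $0.2$ in the original variable, carrying mass exactly $0.8$.

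\emph{Where care is needed.} I expect the main obstacle to be this last step, because it depends on the intermediate-value content of Theorem~\ref{thm:theorem2}. If that theorem had to be re-derived, the construction would go as follows. Let $S=\{\ell>\lambda\}$ be the level set whose measure is closest to $0.2$. This set has measure at most $0.2$. Complete it by a set taken from $\{\ell=\lambda\}$, which brings the mass to $\L^*(0.2)\ge 0.8$. Then slide this set continuously into a set of lower values; for example, replace part of it by a set of the same measure in the region of smallest $\ell$. The mass varies continuously in the sliding parameter, so by the intermediate value theorem it hits exactly $0.8$. Finally, approximate the resulting measurable set by a countable union of intervals with the same measure and the same integral, using outer regularity together with absolute continuity of $\L$.
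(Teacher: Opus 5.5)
Your main line is exactly the paper's. The corollary is read off from Theorem~\ref{thm:theorem2}: necessity comes from the Hardy--Littlewood bound $\int_U\ell\le\L^*(0.2)$ together with strict monotonicity of $\L^*(t)-1+t$, and sufficiency from minimality of $p^*$ plus the intermediate-value clause. Taken with Theorem~\ref{thm:theorem2} as stated, your argument is as complete as the paper's.

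The problem is the step you flagged yourself, and your proposed patch for it does not work. Outer regularity and absolute continuity only give a union of intervals whose measure and mass are within $\varepsilon$ of $0.2$ and $0.8$, not both exactly. At the boundary $p^*=0.2$, no exact union need exist at all. Take a closed nowhere dense (fat Cantor) set $C$ with $\mu(C)=0.2$, and put $\ell=4$ on $C$ and $\ell=\tfrac14$ on $[0,1]\setminus C$. Then $\ell^*$ is the paper's step density with $p^*=0.2$. But every nondegenerate interval meets $[0,1]\setminus C$ in positive measure. Hence every countable disjoint union $U$ of intervals with $\mu(U)=0.2$ has $\int_U\ell=\tfrac{1}{20}+\tfrac{15}{4}\,\mu(U\cap C)<0.8$. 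So the ``if'' direction, read literally with intervals $[a_j,b_j]$, fails for this $\ell$. The paper's proof of Theorem~\ref{thm:theorem2} hides the same issue by treating measurable sets as interchangeable with countable unions of intervals.

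For $p^*<0.2$ the statement does hold, but you must approximate first and apply the intermediate value theorem afterwards.
\begin{enumerate}
\item In this case $\L^*(0.2)>0.8$ strictly. So regularity and absolute continuity, after trimming or extending by a short interval, give a finite union of closed intervals $U_1$ with $\mu(U_1)=0.2$ exactly and $\int_{U_1}\ell>0.8$.
\item The sets $A_s=[0,\mu(U_1\cap[0,s])]\cup(U_1\cap[s,1])$, $s\in[0,1]$, are finite unions of intervals of measure $0.2$. They satisfy $\mu(A_s\triangle A_{s'})\le 2|s-s'|$ and run from $U_1$ to $[0,0.2]$.
\item Continue the path with the windows $[x,x+0.2]$. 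By pigeonholing over $[0,0.2],[0.2,0.4],\dots,[0.8,1]$, one of these has mass at most $0.2$.
\item Since $\bigl|\int_A\ell-\int_B\ell\bigr|\le\int_{A\triangle B}\ell$, the mass is continuous along this path. It therefore equals $0.8$ exactly on some finite union of intervals.
\end{enumerate}
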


Therefore, using \eqref{eq:generalprincipleunion} as a definition for the generalized Pareto principle allows a family of generalized principles to be satisfied with rearrangements of the same underlying distribution. This is conceptually too weak, and in order to study a unique generalized principle with non-trivial behavior, we must impose further limitations.

\subsection{Final formulation of the generalized principle}

Considering the conceptual meaning of the original Pareto principle as well as the more general form, we impose two natural requirements to obtain a unique and non-trivial characterization:
\begin{enumerate}
\itemsep0.1em
    \item \textbf{Maximal discrepancies} \\
    Pareto-type statements are typically used to quantify how far a system is from a 0.5/0.5-balance. It is therefore natural to focus on the maximal discrepancy between fractions of input and output.
    \item \textbf{Domain order-independent formulation} \\
    There often exists an ordering of the domain (time, size, wealth, etc.), and hence, the distribution. To obtain an ordering independent of the original dataset, we ``sort'' the gain density from largest values to smallest by a decreasing rearrangement.
\end{enumerate}

Implementing these leads to the following working definition:
\begin{definition}
    Let $\ell :[0,1] \to [0,\infty)$ be $L^1([0,1])$ and normalized to unity, and let $\ell^*$ denote its decreasing rearrangement. Define the cumulative gain function of such distribution as 
    \begin{equation}
        \L^{(*)}(t) = \int_0^t \dd s \ \ell^{(*)}(s) \ .
    \end{equation}
    Then, $\ell$ satisfies a generalized Pareto principle with $p \in (0, 1/2]$ if 
    \begin{equation}
        \L^*(p) = 1- p \ .
    \end{equation}
\end{definition}

We restrict to $p \in (0,1/2]$ for two reasons: First, by symmetry, a gain density satisfying the $p/(1-p)$-principle also satisfies the $(1-p)/p$-principle, so $p > 1/2$ is redundant. Second, the endpoints $p = 0,1$ correspond to degenerate cases (discrete measures) which lie outside $L^1$, and for all practical purposes, it is sufficient to get arbitrarily close to these limits. Working on a half-open interval also allows for easy inequality proofs in what follows.

In data, either the underlying random variable is discrete, or the data must be binned to find a sensible distribution. An additional factor of variation in the $L^1$-approximation of data results from choosing the (width of) bins. However, as long as an $L^1$-approximation is used and a decreasing rearrangement performed, the existence of a unique generalized Pareto principle is guaranteed.

Partially due to the above freedom, the principle is often not considered in the strictest sense with data, but rather an agreement up to a non-well-defined tolerance is accepted -- for example, a ``0.20/0.78''-rule might be considered ``good enough'' to count as an occurrence of the canonical Pareto principle. Since we are working with $L^1$-(approximated) distributions, we will study only the exact generalized Pareto principles and discuss the special role of the canonical 0.2/0.8-principle further in Section \ref{sec:discussion}.

Finally, we have decided to frame the principle without explicit reference to standard probability theory, since we do not want to impose upon ourselves the related mathematical restrictions\footnote{In fact, historically, Pareto himself was aware of the possibility of a probabilistic interpretation, but decided to frame the principle without the theory of probabilities \cite{Tusset:2024}.}.
This is simply because there could exist sensible interpretations of other gain densities $\ell$ than those allowed by probability theory; see Appendix \ref{app:limitationsoff} for further discussion. The interested reader may find some of the measure-theoretic considerations laid out explicitly in \cite{Hardy:2010}.

\section{Example distributions and existence of generalized principles}
\label{sec:examplesexistence}

We now examine gain density examples to illustrate how the generalized principle emerges in diverse functional forms. These cases demonstrate, in addition to the framework, the limitations imposed on $\ell$. For each distribution family we also derive which range of generalized principles can be satisfied by suitably tuning the underlying parameters. 

\subsection{Constructed examples}
\label{sec:constructedcases}

\subsubsection{Step function}

Consider a step function profile: a period of faster gains followed by a period of slower gains. For the canonical 0.2/0.8-principle, 
\begin{equation}
    \ell(t)=
\begin{cases}
4, & 0\le t < 0.2 \\
0.25, & 0.2\le t\le 1 
\end{cases}    \ .
\end{equation}
Such step function densities with $p = 0.1, 0.2$ and 0.4, and the associated cumulative distributions are shown in Fig. \ref{fig:step}. 
\begin{figure}[ht]
    \centering    
    \includegraphics[width=\linewidth]{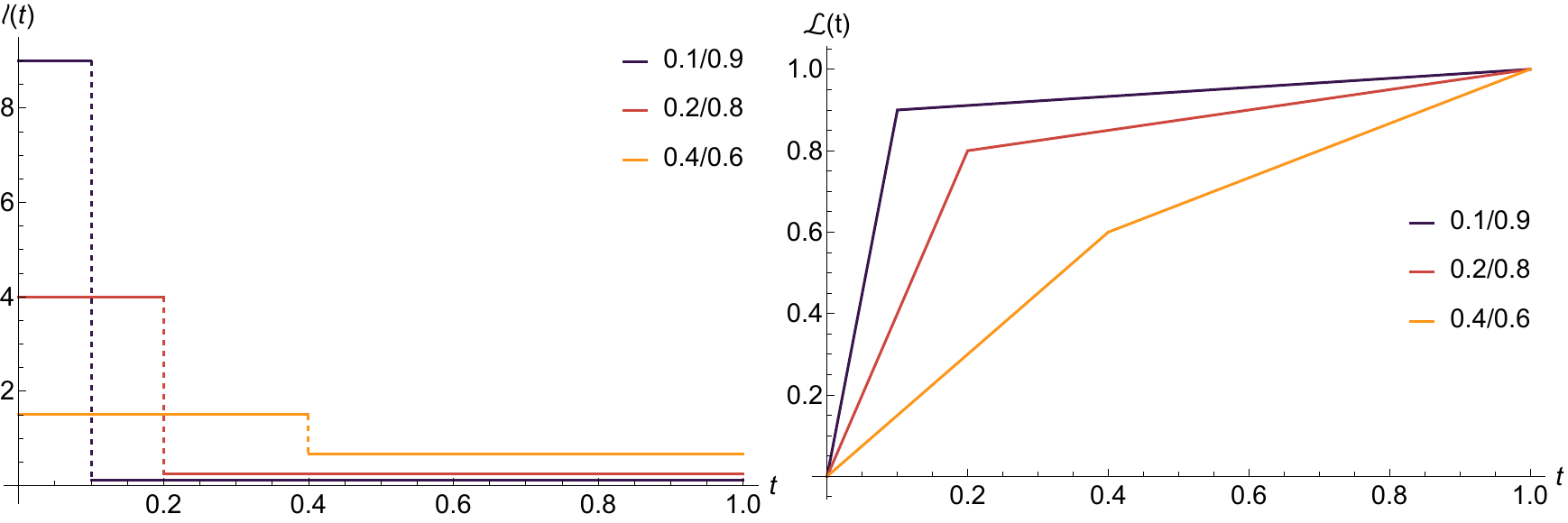}
    \caption{Step function densities and their related cumulative distributions with $p = 0.1, 0.2$ and $0.4$.}
    \label{fig:step}
\end{figure}

Even though otherwise trivial, such densities have an interesting feature. Define how egalitarian a distribution is by the largest ratio of gain densities or their limiting value:
\begin{definition}
    Define the inequality-index of a distribution as
    \begin{equation}
        \Ib_\ell = \sup \qty(\frac{\ell(t_1)}{\ell(t_2)}) = \frac{\sup \ell(t_1)}{\inf \ell(t_2)}, \quad t_{1,2} \in [0,1] \ , \quad \ell(t_2) \neq 0 \ .
    \end{equation}
\end{definition}
The word ``inequality'' is used due to the original consideration of wealth distributions by Pareto, even though we do not attempt here to construct an inequality index that would fulfill the usual axioms from inequality economics. A natural theorem follows:
\begin{theorem}
    For all distributions satisfying the $p/(1-p)$-principle, the step function distribution is the most egalitarian with respect to $\Ib_\ell$.
\end{theorem}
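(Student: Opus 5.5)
The plan is to compute $\Ib_\ell$ for the step function and then show that no other density satisfying the $p/(1-p)$-principle can have a smaller index. Write $M = \sup \ell = \sup \ell^*$ and $m = \inf \ell = \inf \ell^*$. Here I use that the decreasing rearrangement is equimeasurable with $\ell$, so the essential supremum and infimum are unchanged. I will read $\sup$ and $\inf$ as essential ones, since $\ell\in L^1$ is only defined almost everywhere. Because $\Ib_\ell$ depends only on $M$ and $m$, it is invariant under rearrangement, and I may work with $\ell^*$ throughout; by the Definition, the principle is a statement about $\ell^*$ anyway.

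For the step function at level $p$, the density is $(1-p)/p$ on $[0,p)$ and $p/(1-p)$ on $[p,1]$. This generalizes the $p=0.2$ example with values $4$ and $0.25$. Its index is therefore $\Ib_{\rm step} = \qty((1-p)/p)^2$.

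Now take any normalized $\ell$ whose rearrangement satisfies $\L^*(p) = 1-p$, with $p\in(0,1/2]$. Since $\ell^* \le M$ almost everywhere on $[0,p]$,
\begin{equation*}
1-p = \int_0^p \dd s\, \ell^*(s) \le pM, \qquad\text{so}\qquad M \ge \frac{1-p}{p}.
\end{equation*}
Since $\ell^* \ge m$ almost everywhere on $[p,1]$,
\begin{equation*}
p = \int_p^1 \dd s\, \ell^*(s) \ge (1-p)m, \qquad\text{so}\qquad m \le \frac{p}{1-p}.
\end{equation*}
If $m=0$, I interpret the index as $+\infty$ (or take the restriction $\ell(t_2)\neq0$ into account) and the claim is trivial. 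Otherwise $\Ib_\ell = M/m \ge \qty((1-p)/p)^2 = \Ib_{\rm step}$, which proves minimality.

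For uniqueness, equality in both bounds forces $\ell^* = M$ almost everywhere on $[0,p]$ and $\ell^* = m$ almost everywhere on $[p,1]$. So $\ell^*$ is exactly the step function, and $\ell$ is a rearrangement of it.

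The main obstacle is interpretive rather than computational. The definition of $\Ib_\ell$ uses a pointwise $\sup/\inf$ over $\ell(t_2)\neq 0$. Two things must be handled. First, pointwise values on null sets must be ignored, which I will do by passing to essential bounds so that the comparison with $\ell^*$ is legitimate. Second, the exclusion of zeros must be addressed. If zeros were simply excluded and the infimum taken over the positive values, then on $[p,1]$ the bound above still holds for the positive part as long as $\ell^*$ is not identically zero there. But $\ell^*$ cannot vanish a.e. on $[p,1]$, since its integral there is $p>0$, so the argument survives. I will check this case carefully: I will restrict the lower bound to the set where $\ell^*>0$, whose measure is at most $1-p$, which only strengthens the inequality for $m$.
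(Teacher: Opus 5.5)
Your core argument is the paper's argument, made quantitative. The paper argues that if $\ell^*$ is not uniform on $[0,p]$ it must exceed its average somewhere, which raises the supremum, and likewise fall below its average somewhere on $[p,1]$, which lowers the infimum. Your bounds $pM\ge 1-p$ and $(1-p)m\le p$ are exactly that statement, and your equality analysis adds uniqueness. That part is fine.

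The gap is in how you dispose of zeros. In your last paragraph you restrict to $A=\{t\in[p,1]:\ell^*(t)>0\}$ and claim that $\mu(A)\le 1-p$ ``only strengthens'' the bound on $m$. It does the opposite. Writing $m_+$ for the essential infimum of $\ell^*$ on $A$, the identity $p=\int_A \ell^*(s)\,\dd s\ge \mu(A)\,m_+$ gives $m_+\le p/\mu(A)$. Since $\mu(A)\le 1-p$, this bound satisfies $p/\mu(A)\ge p/(1-p)$, so it is weaker, and you no longer get $M/m_+\ge (1-p)^2/p^2$.

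This is not a slip that careful bookkeeping repairs. Under the reading where $t_2$ ranges only over $\{\ell\neq 0\}$, the statement is false. For $p<1/2$, let $\ell=\frac{1-p}{p}$ on $[0,\frac{p}{1-p}]$ and $\ell=0$ on the rest of $[0,1]$. This $\ell$ is normalized and equals its own decreasing rearrangement. It satisfies $\L^*(p)=p\cdot\frac{1-p}{p}=1-p$. All of its nonzero values coincide, so $\Ib_\ell=1<(1-p)^2/p^2$.

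What rescues the theorem is the paper's standing ``no padding by zeros'' assumption from Appendix~A, namely $\mu(\{\ell=0\})=0$. Under it:
\begin{itemize}
\item $\mu(A)=1-p$;
\item excluding $\ell(t_2)=0$ discards only a null set, so $m_+=m$ and your bound goes through;
\item $m=0$ now means $\ell$ takes arbitrarily small positive values on sets of positive measure, so $\Ib_\ell=+\infty$ directly from the definition.
\end{itemize}
You should invoke that assumption explicitly, or alternatively adopt your first convention ($\Ib_\ell=+\infty$ whenever $\ell$ vanishes on a set of positive measure). Either way, delete the claim that the argument survives when zeros are simply excluded from the infimum.
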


\noindent \textbf{Proof:} Let $\ell$ satisfy the $p/(1-p)$-principle, and study its decreasing rearrangement $\ell^*$. By construction, for all $t_1 \in [0,p], \ t_2 \in [p,1], \ \ell^*(t_1) \geq \ell^*(t_2)$. By the second form of $\Ib_\ell$, we minimize inequality by minimizing the supremum and maximizing the infimum. 

Should the gain density on the subinterval $[0,p]$ differ from the uniform one, some differential intervals would have higher-than-uniform densities, while some others would have lower-than-uniform densities. This would raise the supremum. An analogous argument also shows this to lower the infimum on the subinterval $[p, 1]$. Hence, a uniform distribution for both subintervals yields the most egalitarian distribution, which together correspond to a step-function density.  \\

The normalized step function density satisfying the $p/(1-p)$-principle is
\begin{equation}
    \ell_p(t) = 
    \begin{cases}
        &\frac{1-p}{p}, \ t < p \\
        &\frac{p}{1-p}, \ t \geq p 
    \end{cases}\ .
\end{equation}
Hence, we can read off that the minimal inequality index for a given $p$ is $\frac{(1-p)^2}{p^2}$. This minimal inequality index as a function of $p$ is shown in Fig. \ref{fig:index}.
\begin{figure}[ht]
    \centering    
    \includegraphics[width=0.6\linewidth]{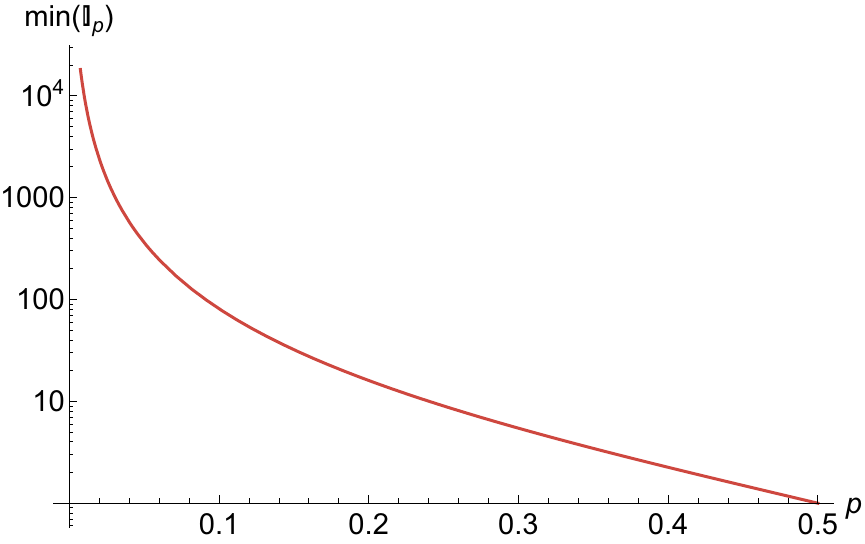}
    \caption{Minimal inequality index or ratio of gain densities a distribution must have to satisfy a given $p/(1-p)$-principle.}
    \label{fig:index}
\end{figure}

\subsubsection{Divergent profiles and rearrangements}

Consider a divergent rate of gains 
\begin{equation}
    \ell(t) = \frac{1}{2 \sqrt{2}} \frac{1}{\sqrt{\mid t - \frac{1}{2} \mid}} \ ,
\end{equation}
for which one must find the decreasing rearrangement. The distribution is symmetric with respect to $t = \frac{1}{2}$, increasing for $t < \frac{1}{2}$ and decreasing for $t > \frac{1}{2}$. In such a simple case, the decreasing rearrangement is achieved by shifting the right-hand side of the distribution to start from zero and by sending $t \to t/2$, so together, $t \to \frac{t}{2} + \frac{1}{2}$. This can be thought of as the continuous version of doubling the length of every bin.

Note that shifting the divergence to zero and re-normalizing is not in general equivalent to the decreasing rearrangement; as the rearrangement is done correctly, the density profile stays normalized. The rearranged distribution is
\begin{equation}
    \ell^*(t) = \ell\qty(\frac{t}{2}+\frac{1}{2}) = \frac{1}{2 \sqrt{t}} \ .
\end{equation}
The original and rearranged densities as well as their cumulative gain functions are shown in Fig. \ref{fig:div}. The equation for the generalized principle is $\sqrt{p} + p - 1 = 0$, which is solved by $p = (3  - \sqrt{5})/2 \approx 0.382$.

\begin{figure}[ht]
    \centering
    \includegraphics[width=\linewidth]{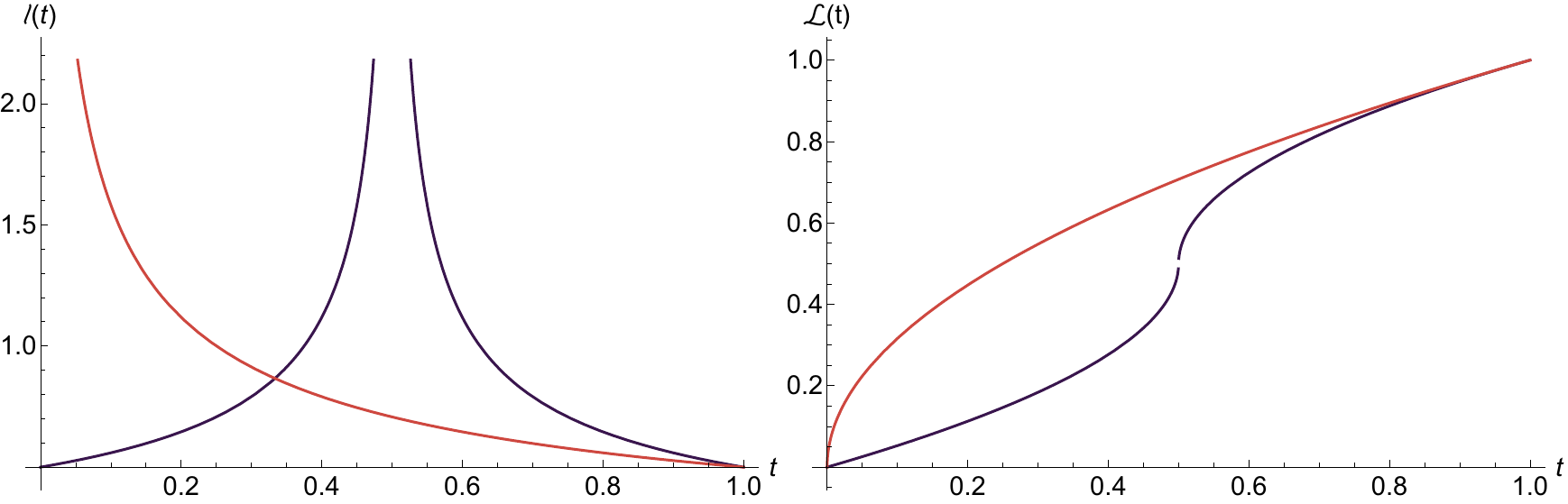}
    \caption{An $L^1$-integrable divergent density with its decreasing rearrangement, and their related cumulative gain functions. As must be, for all $t \in [0,1]$, $\L^*(t) \geq \L(t)$.}
    \label{fig:div}
\end{figure}

Remembering the requirement of no padding by zeros, define a distribution of periodic diminishing returns with essential support in $[0,1]$. To make finding the rearrangement analytically feasible, consider the following piecewise distribution:
\begin{equation}
    \ell(t) = 
    \begin{cases}
        \frac{16}{5}(1-4t), \ &0 \leq t <\frac{1}{4} \\
        \frac{12}{5}\qty(1-4\qty(t-\frac{1}{4})), \ &\frac{1}{4} \leq t < \frac{1}{2} \\
        \frac{8}{5}\qty(1-4\qty(t-\frac{1}{2})), \ &\frac{1}{2} \leq t < \frac{3}{4} \\
        \frac{4}{5}\qty(1-4\qty(t-\frac{3}{4})), \ &\frac{3}{4} \leq t \leq 1
    \end{cases} \ .
\end{equation}
The decreasing rearrangement is
\begin{equation}
     \ell^*(t) = 
     \begin{cases}
         \frac{64}{5}\qty(\frac{1}{4} - t), \ &0 \leq t < \frac{1}{16} \\
         \frac{192}{35}\qty(\frac{1}{2} - t), \ &\frac{1}{16} \leq t < \frac{5}{24} \\
         \frac{192}{65}\qty(\frac{3}{4} - t), \ &\frac{5}{24} \leq t < \frac{23}{48} \\
         \frac{192}{125}\qty(1 - t), \ &\frac{23}{48} \leq t \leq 1
     \end{cases} \ .
\end{equation}
Both of the above densities and their cumulative gain functions are shown in Fig. \ref{fig:per}. The generalized principle is satisfied with $p = \frac{11}{32}$.

\begin{figure}[ht]
    \centering
    \includegraphics[width=\linewidth]{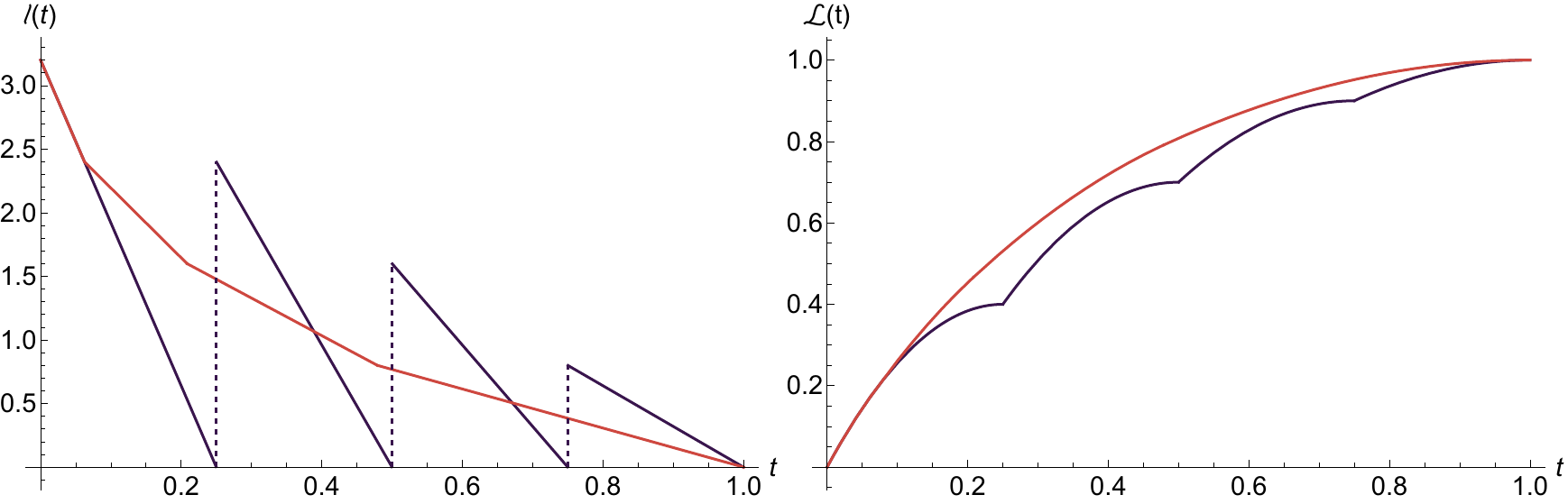}
    \caption{A periodic gain density with its decreasing rearrangement, and their related cumulative gain functions.}
    \label{fig:per}
\end{figure}

\subsubsection{Polynomial densities}
\label{sec:polynomialdensities}

As a final constructed example, consider the case of polynomial densities. Let the normalized, monotonically decreasing polynomial density with exponent $\alpha \geq 0$ be
\begin{equation}
    \ell_\text{pol}(t) = \frac{1-t^\alpha}{1 - \frac{1}{\alpha +1}} \ , \qquad \L_\text{pol}(t) = \frac{t(1 - t^\alpha + \alpha)}{\alpha} \ .
\end{equation}
Examples of such densities with $\alpha = \frac{1}{4}, 1, 3$ and $10$, and their cumulative gain functions are shown in Fig. \ref{fig:pol}. The generalized principles these distributions satisfy have $p \approx 0.339, 0.382, 0.434 $ and $0.476$, respectively.

\begin{figure}[ht]
    \centering
    \includegraphics[width=\linewidth]{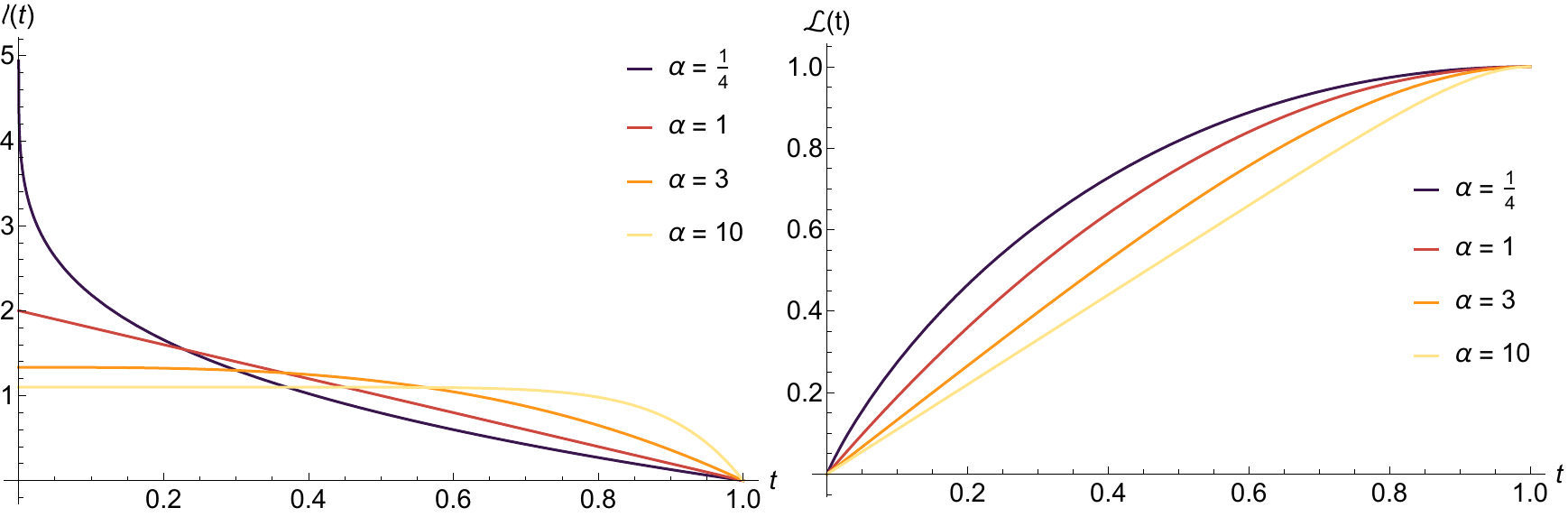}
    \caption{Polynomial densities with $\alpha = \frac{1}{4}, 1, 3$ and $10$, and their related cumulative gain functions.}
    \label{fig:pol}
\end{figure}
A natural question arises: can we always find an $\alpha \in [0, \infty)$ such that any given generalized principle is satisfied? That is, can we always find an $\alpha$ such that
\begin{equation}
\label{eq:polynomial}
    \frac{p(1-p^\alpha + \alpha)}{\alpha} = 1 - p \ ,
\end{equation}
for any $p \in (0, 1/2]$? On one hand, at the limit $\alpha \to \infty$ we obtain the uniform distribution, for which $p \to 1/2$. On the other hand, we may rearrange \eqref{eq:polynomial} into the form
\begin{equation}
    \frac{p^\alpha - 1}{\alpha} = 2 - \frac{1}{p} \ .
\end{equation}
We can take the limit $\alpha \to 0$ with $\lim_{\alpha \to 0} \frac{x^\alpha - 1}{\alpha} = \ln (x)$ for $x > 0$, following from the expansion $x^\alpha = \sum_{n = 0}^\infty \frac{(\alpha \log (x))^n}{n!}$:
\begin{equation}
\label{eq:logidentity}
    \frac{x^\alpha -1}{\alpha } = \sum_{n = 1}^\infty \frac{\alpha^{n-1} (\ln (x))^n}{n!} = \ln (x) + \frac{\alpha (\ln(x))^2}{2}+ \dots \to_{\alpha \to 0} = \ln (x) \ .
\end{equation}
Thus, the singularity at $\alpha = 0$ is removable, and we get a limiting equation for the polynomial profile,
\begin{equation}
    \ln (p) = 2 - \frac{1}{p} \ .
\end{equation}
Let $h(p) = \ln (p) - 2 + \frac{1}{p}$. Then $h'(p) = \frac{p-1}{p^2} < 0$, the function is strictly decreasing on $(0, 1/2]$, and by the IVT, there is a unique root of $h(p) = 0$. This can be found numerically to be $p^* \approx 0.318$. This is the lowest value of $p$ with which such a polynomial profile may satisfy the generalized principle. We will study similar limits below for common distribution families.

\subsection{Distribution families}
\label{sec:commondistribution}

Let us consider distribution families commonly encountered in datasets. In order not to limit ourselves to the parameter range $t \in [0,1]$, we must reparametrize the (truncated) distributions. This becomes especially apparent with the power-law distribution, in which the values are ill-defined with a bare variable value $0$. 

Define the value of the random variable $x \in [a,b]$ with an affine map
\begin{equation}
    x(t) = a + (b-a)t \ .
\end{equation}
We could of course consider other maps as well, but since the imbalance of gains is seen clearly on a linear axis, we shall use this parametrization. If the distribution on the original interval is $f_X$, the corresponding $\ell(t)$ is
\begin{equation}
    \ell(a,b;t) = f_X\qty(x(t)) x'(t) = f_X\qty(a + (b-a)\cdot t))\cdot(b-a) \ ,
\end{equation}
as well as the cumulative gain function,
\begin{equation}
    \L(a,b;t) = (b-a) \int_0^t \dd s \ f_X\qty(a + (b-a)\cdot s) \ .
\end{equation}

\subsubsection{Power-law distribution}
\label{sec:powerlaw}

Perhaps the first model that comes to mind in any application of the Pareto principle is that of the power-law, also known as a Pareto distribution. In fact, it is so commonly encountered that various works have focused solely on its form and interpretation across sciences \cite{Newman:2005, Clauset:2009}. Note however, that many datasets showcase a power-law \textit{tail} instead of a strict power-law for the full parameter range. For example in \cite{Clauset:2009} it was found that only one of the 24 datasets considered exhibited a statistically probable pure power-law for the full parameter range.

The distribution of a bounded power-law with exponent or \textit{scale} $\alpha$ and lower and upper limit $a$ and $b$ is given by
\begin{equation}
    f_\text{$X$, pow}(a, b, \alpha; x) = 
    \begin{cases}
        &\frac{1-\alpha}{b^{1-\alpha}- a^{1-\alpha}}x^{-\alpha} \ , \ \alpha \neq 1 \\
        &\frac{1}{\log(b/a)} \frac{1}{x} \ , \ \alpha = 1 
    \end{cases} \ .
\end{equation}
We have $x > 0$ and assume $\alpha > 1$. To be explicit, the transformed gain density and cumulative gain function with $\alpha > 1$ are given by
\begin{equation}
\label{eq:powerlawabalpha}
    \begin{split}
        \ell_\text{pow}(a,b,\alpha;t) &=  \frac{(1-\alpha)(b-a)}{b^{1-\alpha}- a^{1-\alpha}}\qty(a + (b-a)t)^{-\alpha}  \\
        \L_\text{pow}(a,b,\alpha;t) &= \frac{\qty(a + (b-a)t)^{1-\alpha} - a^{1-\alpha}}{b^{1-\alpha} - a^{1-\alpha}} \ .
    \end{split}
\end{equation}
We can write both distributions in \eqref{eq:powerlawabalpha} using the ratio $r = b/a$ as
\begin{equation}
    \begin{split}
        &\ell_\text{pow}(r,\alpha;t) = \frac{(1-\alpha)(r - 1)}{r^{1-\alpha}-1} \qty(1 + (r - 1)t)^{-\alpha}  \\
        &\L_\text{pow}(r,\alpha;t) = \frac{(1 + (r-1)t)^{1-\alpha}-1}{r^{1-\alpha}-1} \ .
    \end{split}
\end{equation}
Gain densities with ratio and scale combinations $(r,\alpha) = (2,2), (2,3), (10,2)$ and $(10,3)$, as well as their corresponding cumulative gain functions are shown in Fig. \ref{fig:pow}. The principles these profiles satisfy have $p \approx 0.414, 0.373, 0.240$ and $0.160$, respectively.
\begin{figure}
    \centering
    \includegraphics[width=\linewidth]{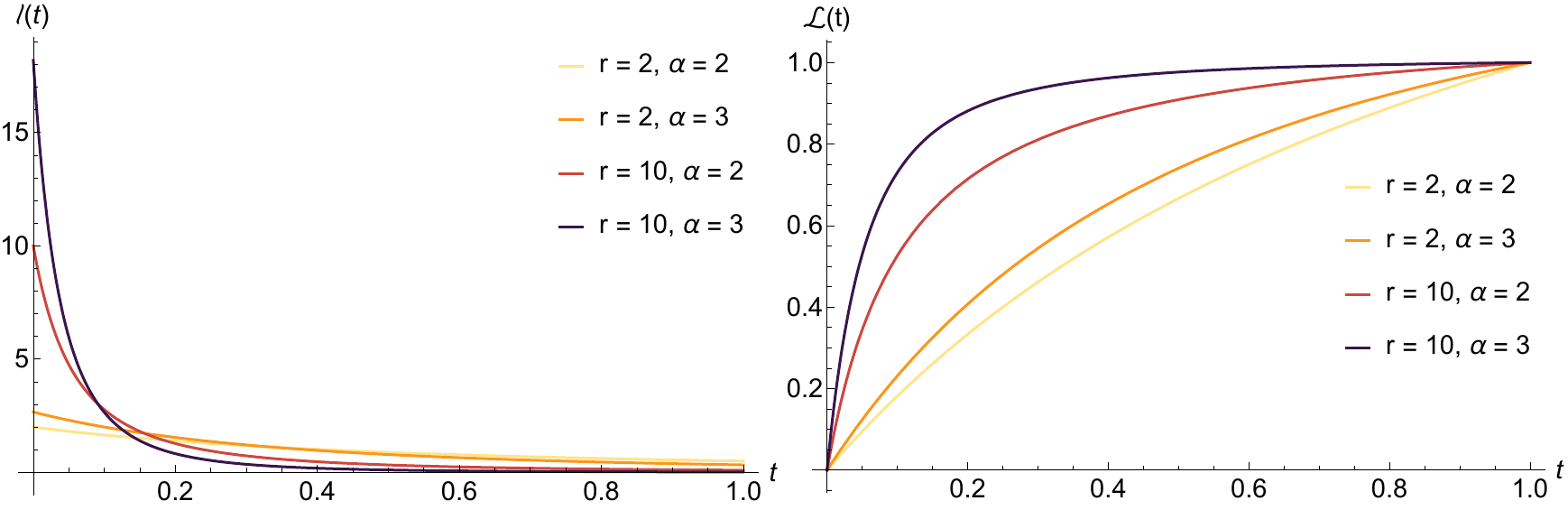}
    \caption{Power-law densities with ratio and scale combinations $(r,\alpha) = (2,2), (2,3), (10,2)$ and $(10,3)$, and their related cumulative gain functions.}
    \label{fig:pow}
\end{figure}

To find which generalized principles can be satisfied by tuning $r$ and $\alpha$, study the zeros of
\begin{equation}
\label{eq:powerlawparetopoint}
    h(r, \alpha; p) =\frac{(1+ (r-1)p)^{1 - \alpha} - 1}{r^{1-\alpha}-1} - 1 + p \ .
\end{equation}
First, consider fixing $\alpha$ and varying $r$. The singularity at $r = 1$ is removable, and the two limits give
\begin{equation}
    \lim_{r \to 1}h(r;\alpha, p) = -1 + 2p \leq 0 \ , \qquad \lim_{r \to \infty}h(r,\alpha;p) = p > 0 \ .
\end{equation}
The function $h$ is continuous on the interval, so by IVT, there exists a root for some $r \in [1,\infty)$, independent of the value of $\alpha$. Hence, if $r$ is allowed to vary, any generalized principle can be satisfied. 

Now consider fixing $r$ and varying $\alpha$. Let $q(r,p) = 1 + (r-1)p, \ 1 < q \leq r$, with which $h$ becomes
\begin{equation}
    h(r,\alpha;p) = \frac{q(r,p)^{1-\alpha}-1}{r^{1-\alpha}-1} -1 + p\ .
\end{equation}
The limit $\alpha \to 1$ is equivalent to $\beta = \alpha - 1 \to 0$, the singularity of which is again removable. Using \eqref{eq:logidentity} we get
\begin{equation}
    \lim_{\alpha \to 1} h(r,\alpha;p) = \frac{\ln q(r,p)}{\ln r} - 1+ p \ , \qquad \lim_{\alpha \to \infty} h(r,\alpha;p) = p \ .
\end{equation}
The endpoint is again positive, but the first limit requires further study:
\begin{equation}
    \pdv{}{\beta}\qty(\frac{q(r,p)^{-\beta} - 1}{r^{-\beta} - 1}) = \qty(\frac{r}{q(r,p)})^\beta\frac{1}{\qty(r^\beta - 1)^2}\qty[\qty(r^\beta - 1)\ln (q(r,p)) - \qty(q(r,p)^\beta - 1)\ln (r)] \ .
\end{equation}
The prefactor is always positive, since $\beta > 0$ and $r > 1$. Let the sign function be $S(\beta) = (r^\beta - 1)\ln (q(r,p)) - (q(r,p)^\beta - 1)\ln (r)$. Then $\lim_{\beta \to 0} S(\beta) = 0$ and 
\begin{equation}
    S'(\beta) = \ln(r) \ln(q(r,p))\qty(r^\beta - q(r,p)^\beta) \ ,
\end{equation}
and since $r \geq q(r,p) > 1$ and $\beta > 0$, $S'(\beta) \geq 0$. Hence, $h(r,\alpha;p)$ is monotonous, and the limit of $\alpha \to 1$ gives the highest possible value with which the principle can be satisfied. The condition for the existence of a solution with a given $p$ reads
\begin{equation}
    \frac{\ln(1 + (r-1)p)}{\ln(r)} < 1 - p \ .
\end{equation}
By monotonicity of the logarithm, the constraint can be written as
\begin{equation}
    1 + (r-1)p < r^{1-p} \ .
\end{equation}
Therefore, for given $r$ and $p$, the generalized Pareto principle can be satisfied by tuning $\alpha$ if and only if this condition is satisfied. Stated otherwise, for a given $r$, there exists a maximal $p$ for which the generalized principle can be satisfied by tuning $\alpha$ down. By the limit $\alpha \to \infty$, we obtain an increasingly spiked distribution, allowing us to satisfy the generalized principle with $p$ arbitrarily close to 0. We numerically solve and plot $p_\text{max}(r)$ in Fig.~\ref{fig:pmax}. In particular, it is found that satisfying the canonical 0.2/0.8-principle with power-law distributed data becomes impossible for any $\alpha$ after $r \approx 3100$.

\begin{figure}[ht]
    \centering
    \includegraphics[width=0.6\linewidth]{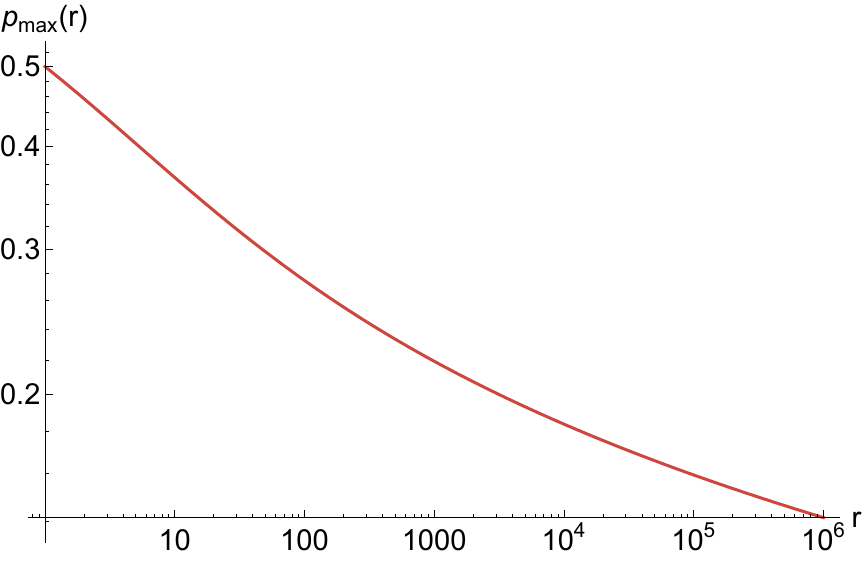}
    \caption{The maximal value of $p$ with which the generalized Pareto principle can be satisfied by varying $\alpha$ with a fixed $r$. The canonical 0.2/0.8-principle becomes impossible with a pure power-law after $r \approx 3100$.}
    \label{fig:pmax}
\end{figure}

\subsubsection{Exponential distribution}
\label{sec:exponentialdistributions}

Any process where decay has the same probability of occurring at any moment in time follows an exponential law. For a given \textit{rate} $\lambda > 0$, the truncated exponential distribution and its cumulative distribution are given by
\begin{equation}
\begin{split}
    \ell_\text{exp}(a, b, \lambda;t) &= (b-a) \frac{\lambda e^{-\lambda \qty(a + (b-a)t)}}{e^{-\lambda a} - e^{-\lambda b}} \\
    \L_\text{exp}(a, b,\lambda; t) &= \frac{e^{-\lambda a}- e^{-\lambda \qty(a + (b-a)t)}}{e^{-\lambda a} - e^{-\lambda b}} \ .
\end{split}
\end{equation}
The shape of both distributions is defined by the combined parameter $\lambda (b-a) = \Lambda > 0$ as
\begin{equation}
    \begin{split}
        \ell_\text{exp}(\Lambda; t) &=  \frac{\Lambda e^{-\Lambda t}}{1 - e^{-\Lambda}} \  \\
        \L_\text{exp}(\Lambda; t) &= \frac{1 - e^{-\Lambda t}}{1 - e^{-\Lambda}} \ .
    \end{split}
\end{equation}
Example gain densities with $\Lambda = 0.1, 1, 4$ and 10, and their related cumulative gain functions are shown in Fig. \ref{fig:exp}. The principles these profiles satisfy have $p \approx 0.494, 0.438, 0.295$ and $0.175$, respectively. 
\begin{figure}
    \centering
    \includegraphics[width=\linewidth]{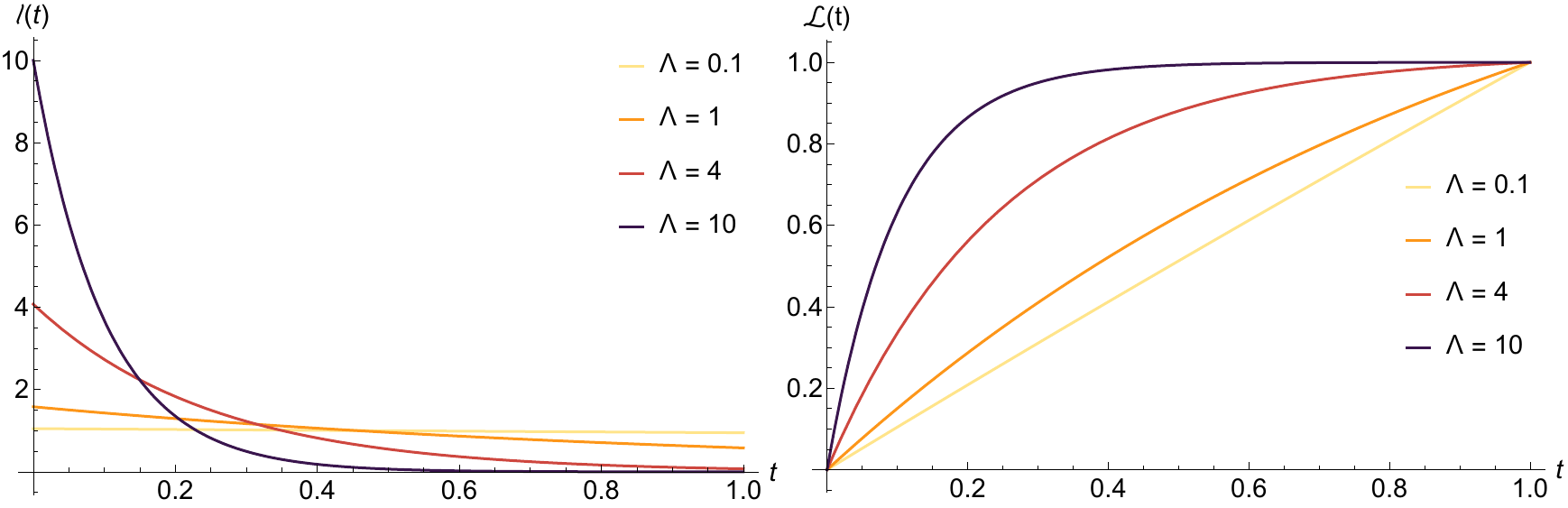}
    \caption{Exponential densities with $\Lambda = 0.1, 1, 4$ and 10, and their related cumulative gain functions.}
    \label{fig:exp}
\end{figure}

The question about the existence of any generalized principle is straightforward. Define
\begin{equation}
\label{eq:exponentialparetopoint}
    h(\Lambda; p) = \frac{1- e^{-\Lambda p}}{1 - e^{-\Lambda}} - 1  + p \ .
\end{equation}
The singularity at $\Lambda = 0$ is removable, and for all $\Lambda > 0, \ h$ is continuous. The limits are
\begin{equation}
    \lim_{\Lambda \to 0}h(\Lambda; p) = -1 + 2p \leq 0\ , \qquad \lim_{\Lambda \to \infty}h(\Lambda; p) = p > 0\ ,
\end{equation}
and by IVT, there exists at least one value of $\Lambda$ with which any generalized principle can be satisfied.

\subsubsection{Normal distribution}
\label{sec:normaldistributions}

Finally, the normal distribution. In principle, we need both the mean $\mu$ and standard deviation $\sigma$ to describe a truncated normal distribution on some interval $[a,b]$. However, we can often choose the distribution to be symmetrically truncated with equally many standard deviations on both sides. In this case $\mu = \frac{a+b}{2}$, making the density
\begin{equation}
    f_{X, \ \text{norm}}(x) = \frac{1}{\sqrt{\pi} \sigma \erf\qty(\frac{b-a}{2 \sigma})}\exp \qty(- \frac{\qty(x - \frac{a+b}{2})^2}{\sigma^2}) \ .
\end{equation}
As before, use the affine map and perform the decreasing rearrangement by sending $t \to \frac{t}{2} + \frac{1}{2}$. The density becomes
\begin{equation}
\ell(a,b,\sigma;t) = \frac{b-a}{\sqrt{\pi} \sigma \erf \qty(\frac{b-a}{2\sigma})}\exp \qty(- \frac{(b-a)^2 t^2}{4 \sigma^2}) \ .
\end{equation}
Similarly to the exponential distribution, set $\Sigma = \frac{b-a}{2 \sigma}$ to get
\begin{equation}
\begin{split}
    \ell_\text{norm}(\Sigma; t) &= \frac{2 \Sigma}{\sqrt{\pi} \text{erf}\qty(\Sigma)} \exp \qty(- \Sigma^2 t^2) \\
    \L_\text{norm}(\Sigma; t) &= \frac{\text{erf}\qty(\Sigma t)}{\text{erf} \qty(\Sigma)} \ .
\end{split}
\end{equation}
Example densities with $\Sigma = 1, 3, 5$ and 10, and their corresponding cumulative profiles are shown in Fig. \ref{fig:norm}. The principles these profiles satisfy have $p \approx 0.443, 0.264, 0.187$ and $0.112$, respectively.
\begin{figure}[ht]

    \centering
    \includegraphics[width=\linewidth]{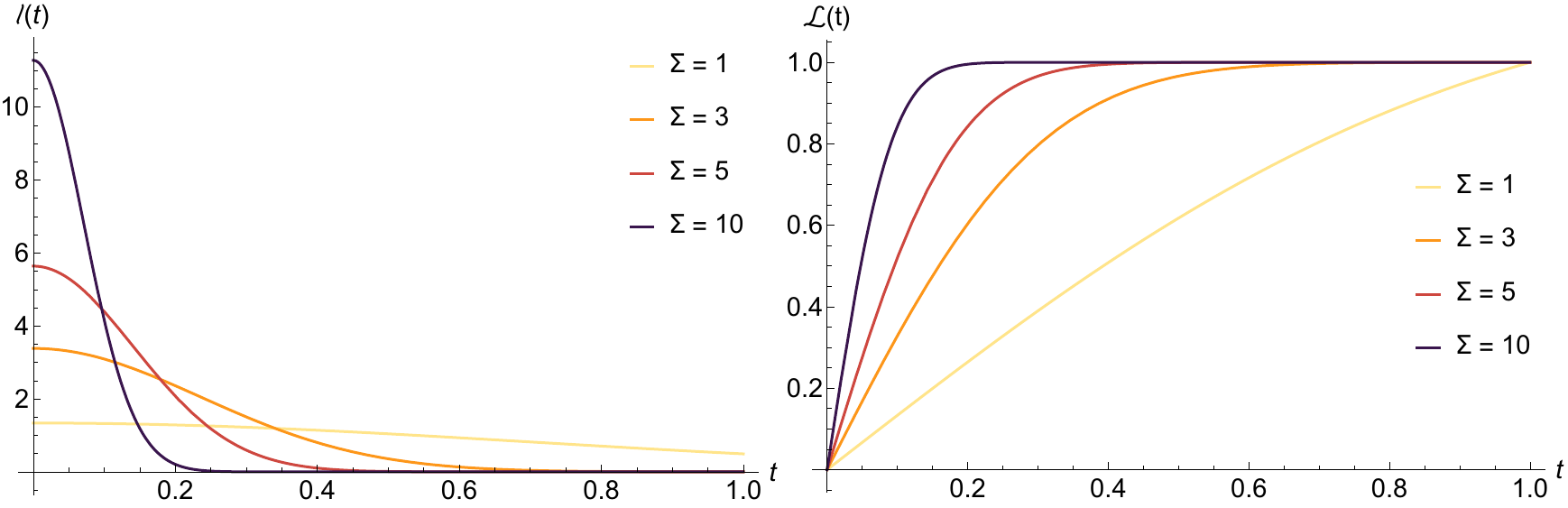}
    \caption{Normal densities  with $\Sigma = 1, 3, 5$ and 10, and their related cumulative gain functions.}
    \label{fig:norm}
\end{figure}

Analogous to previous cases, set
\begin{equation}
\label{eq:normalparetopoint}
    h(\Sigma; p) = \frac{\erf(\Sigma p)}{\erf(\Sigma)} - 1 + p \ .
\end{equation}
The singularity at $\Sigma = 0$ is removable, $h$ is continuous for all $\Sigma > 0$ and again
\begin{equation}
    \lim_{\Sigma \to 0} h(\Sigma; p) = -1 + 2p \leq 0 \ , \qquad \lim_{\Sigma \to \infty} h(\Sigma;p) = p > 0 \ ,
\end{equation}
so by IVT, there exists a value of $\Sigma$ with which any generalized principle can be satisfied.

\section{Common generalized principles and social discussion}
\label{sec:discussion}

With the generalized Pareto principle formalized and explicit density functions analyzed, we now turn to two questions. First, given common distributions and realistic parameter ranges, which $p/(1-p)$-principles should one expect to observe in practice? Second, if such asymmetries are in large part structural, how should we interpret their role in discussions of inequality, decision-making, and management?

Section \ref{sec:canonicalprinciples} addresses the first question by combining the results of Section \ref{sec:examplesexistence} with empirical parameter ranges and typical dataset sizes. Section \ref{sec:socialimplications} then discusses the broader social and conceptual implications.

\subsection{Common generalized Pareto principles}
\label{sec:canonicalprinciples}

Section \ref{sec:examplesexistence} does not attempt to claim that every dataset would satisfy every generalized Pareto principle. With freedom in binning, a dataset may satisfy a set of generalized principles around some value of $p$, but by no means all of them. The previous examples show that for some of the commonly encountered distribution families -- power-laws, exponentials, and normals -- one cannot \textit{a priori} rule out many or any of the generalized principles that a dataset following such distribution can satisfy. One must study further, which values of $p$ are most likely to arise in data.

On one hand, the most direct approach would be to take a look at a representative sample of phenomena showing behavior according to these distributions and computing the generalized Pareto principles they satisfy. Of course, in finite samples, one can observe apparent manifestations of any version of the generalized principle -- the smaller the dataset, the larger the role played by fluctuations. The underlying generating distribution dictates only the $p$ around which such fluctuations can be expected to occur.

On the other hand, \emph{if} a dataset can be assumed to follow a given distribution, we may infer how much range the values in data likely cover -- effectively, we can give a robust estimate for $b/a$ or $b-a$ as a function of the sample size $N$. In essence, this method utilizes the argument of no padding by zeros, discussed in detail in Appendix \ref{app:limitationsoff}: we truncate the distribution as there likely are no more data points.

Independent of either of the above tests, extreme generalized principles very close to $p = 1/2$ or $p \to 0$ require either highly concentrated or almost perfectly uniform gain profiles, and should therefore be expected to be much less common than intermediate cases in data. This in itself gives a reason to believe that the canonical Pareto principle should be more common than many other cases, with the $0.25/0.75$-principle being the most common if symmetricity with respect to extreme cases is assumed. The quantitative calculations below make this intuition more precise for the three distribution families considered.

\subsubsection{Power-law distribution}

We study the power-law parametrized distributions across scientific disciplines from \cite{Clauset:2009} which give a statistically good or moderate fit with a pure power-law in Tables 6.2 and 6.3 of \cite{Clauset:2009}. The ratio $r$ is computed by us from $x_\text{max}$ and the fitted $\hat x_\text{min}$ of the original dataset, $\hat \alpha$ was predicted with maximum likelihood estimation by the authors of \cite{Clauset:2009}, and the generalized principle $p$ is solved numerically by us from the equation \eqref{eq:powerlawparetopoint}. 

We exclude datasets in which the power-law is not a plausible model over the full parameter range, based on major deviations from linearity in the log-log plots of Fig. 6.1 of \cite{Clauset:2009}. The remaining datasets, their parameters given in Table 6.1 of \cite{Clauset:2009} and the $p$ they satisfy the generalized principle with are listed in Table \ref{tab:powerlaws}.

\begin{table}[ht]
    \centering
    \begin{tabular}{lccccc}
    \toprule
    dataset & $\hat x_\text{min}$ & $x_\text{max}$ & $r$ & $\hat \alpha$ & $p$ \\
    \midrule
    sales of books ($\times 10^3$)      & 2400 & 19077 & $7.95$ & $3.7$  & $0.147$  \\
    religious followers ($\times 10^6$) & 3.85 & 1050  & $273$  & $1.8$  & $0.0755$ \\
    intensity of wars                   & 2.1  & 382   & $182$  & $1.7$  & $0.102$  \\
    protein interaction degree          & 5    & 56    & $11.2$ & $3.1$  & $0.144$  \\
    terrorist attack severity           & 12   & 2749  & $229$  & $2.4$  & $0.0394$ \\
    count of word use                   & 7    & 14086 & $2010$ & $1.95$ & $0.0240$ \\
    \bottomrule
    \end{tabular}
    \caption{Statistically possible full parameter range power-law behavior for phenomena present in the datasets analyzed in \cite{Clauset:2009}. The generalized principles satisfied by the datasets are solved from \eqref{eq:powerlawparetopoint}.}
    \label{tab:powerlaws}
\end{table}

For these six cases, we find generalized principles with $p \in [0.024, 0.15]$, all more unequal than the original Pareto principle. The general trend is that the higher the ratio $r$, the lower $p$ the principle is satisfied with, since $\alpha$'s are all approximately in the same range. Intuitively, the more orders a pure power-law behavior covers, the more extreme the resulting differences in gains. 

Based on the values presented in Table \ref{tab:powerlaws} and \cite{Newman:2005, Clauset:2009}, we may take as a rough estimate that across various phenomena, common ranges for parameter values are $r \in [10, 10^3]$ and $\alpha \in [1.5, 3.5]$. We numerically solve $p$ on this range and plot $p$ as a function of $r$ and $\alpha$ in Fig. \ref{fig:commonpowerlaws}.

\begin{figure}[ht]
    \centering
    \includegraphics[width=0.7\linewidth]{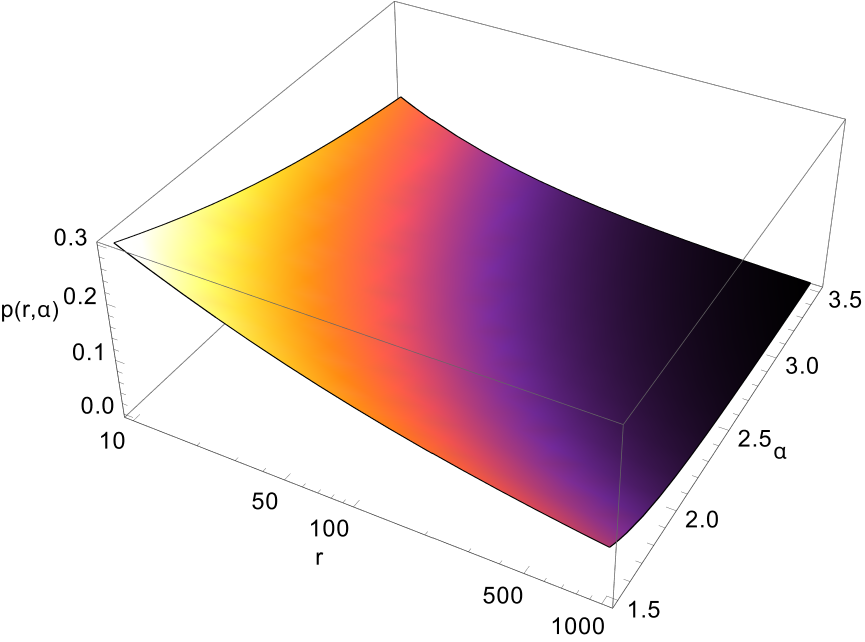}
    \caption{The generalized Pareto principles satisfied by truncated power-laws on common parameter ranges for $r$ and $\alpha$.}
    \label{fig:commonpowerlaws}
\end{figure}
We could also estimate the relevant parameter range as a function of dataset size. Effectively, we want to truncate the distribution or ``cut off its tail'' as no more data points are likely to lie outside of it. This requires us to approximate what is the maximal value found with $N$ data points. A simple estimate for the maximal value with a non-truncated power-law is \cite{Newman:2005}
\begin{equation}
\label{eq:powerlawestimate}
    \mathbb{E}(x_\text{max}) \sim N^{1/(\alpha-1)} \ .
\end{equation}
Even for exponents in the empirically common range, $\Eb(x_\text{max})$ grows very rapidly with $N$, and the largest observations can be several orders of magnitude larger than the smaller ones. This heaviness of the tail of a power-law may often be taken synonymous with the fact that the distribution does not have exponential suppression for outliers. Truncated power-laws hence tend to generate stronger deviations from equality and generalized principles with smaller $p$ than many other distributions with exponential decay of the tail(s).

\subsubsection{Exponential distribution}

For exponential distributions, the dataset size argument can be made particularly explicit, since the shape is governed by a single parameter. As seen, a truncated exponential is governed solely by $\Lambda = (b-a)\lambda$, which describes how many inverse decay lengths are included in the observation window.

Assuming we have $N$ data points distributed according to an \textit{a priori} non-truncated exponential, we may infer the expected minimal and maximal values of the sample. This tells us immediately at what point we must truncate the distribution, and so, $\Lambda$ as a function of $N$. 

For size $N$ independent and identically distributed samples from Exp$(\lambda)$, also their differences are exponentially distributed (see \textit{e.g.} \cite{Arnold:2008}). This allows us to compute the expected minimal and maximal values in the dataset as
\begin{equation}
    \mathbb{E}[\min (X_N)] = \frac{1}{N \lambda} \ , \qquad \mathbb{E}[\max (X_N)] = \frac{1}{\lambda} \sum_{j=1}^N \frac{1}{j} = \frac{H_N}{\lambda} \ ,
\end{equation}
where $H_N$ is the $N^{\text{th}}$ harmonic number. Since $N \gg 1$, we can use the asymptotic expansion
\begin{equation}
    H_N = \log(N) + \gamma + O(1/N) \ ,
\end{equation}
where $\gamma$ is the Euler-Mascheroni constant.

With a lower limit on observations, the expected minimal value in a dataset of size $N$ is $a + \frac{1}{N \lambda} = a + O(1/N)$, while the highest is $b = (\log(N) + \gamma + O(1/N))/\lambda + a$. Therefore, we get an estimate for $\Lambda$ in terms of $N$,
\begin{equation}
    \Lambda = \lambda (b-a) \approx \log(N) + \gamma \ .
\end{equation}
Thus, satisfying a given generalized Pareto principle with an exponentially distributed dataset is effectively just a question about dataset size. The more observations you have, the higher the probability that some of them are found in the extreme tail, lowering the value of $p$.

Finally, can we estimate $N$ any better than $\Lambda$? We can at least give plausible ranges of $N$, depending on real studies conducted. Looking at \cite{Clauset:2009} and \cite{Limpert:2001} purely from the point of view of the sizes of datasets, we shall suppose that the size of a dataset commonly encountered in empirical studies lies roughly between $N = 10^2$ and $N = 10^5$. If similar-sized datasets are also used to describe exponentially distributed data, we get an estimate of the range of principles satisfied:
\begin{equation}
    \begin{cases}
        & N = 10^2 \\
        & N = 10^3 \\
        & N = 10^4 \\
        & N = 10^5 
    \end{cases}
    \implies 
    \begin{cases}
        &\Lambda \approx 5.18 \\
        &\Lambda \approx 7.48 \\
        &\Lambda \approx 9.79 \\
        &\Lambda \approx 12.1
    \end{cases}
    \implies
    \begin{cases}
        &p \approx 0.258 \\
        &p \approx 0.209 \\
        &p \approx 0.177 \\
        &p \approx 0.154
    \end{cases} \ .
\end{equation}
Thus, any dataset with a number of data points $N \in [10^2, 10^5]$ following a truncated exponential is likely to follow a generalized Pareto principle that is close to the canonical 0.2/0.8-principle. Naturally, the realized principle of a sample fluctuates around these expected values, with fluctuations playing a larger role as $N$ decreases.

\subsubsection{Normal distribution}

Due to the central limit theorem, the normal distribution is likely the most commonly encountered distribution across empirical data, and understanding the principle for normal distributions covers a wide range of phenomena across all sciences.\footnote{Also log-normals are encountered across sciences \cite{Limpert:2001}, but naturally, log-normals and normals are connected; in a normal distribution, changes are additive, while in a log-normal, they are multiplicative. If $X$ is log-normally distributed, $\log(X)$ is normally distributed. We could in principle study also log-normals by suitable changes of parameters, but here we focus only on the normal, mainly due to its prevalence and symmetry properties.}

We have considered the normal distributions to be truncated symmetrically with respect to the mean $\mu = \frac{a+b}{2}$. Thus, the parameter interval is $[a,b] = [\mu - k \sigma, \mu + k \sigma]$ with $k$ describing how many standard deviations of observations are considered, and in fact,
\begin{equation}
    \Sigma = \frac{2 k \sigma}{2 \sigma} = k \ .
\end{equation}
With a size $N$ dataset, we want to find $\Sigma_N$ such that approximately one observation is left in the cut tail. This gives a condition
\begin{equation}
    2 (1 - \Phi(\Sigma_N)) \approx \frac{1}{N} \ ,
\end{equation}
where $\Phi(x) = \frac{1}{2}\qty(1 + \erf\qty(x/\sqrt{2}))$. Inverting this relation, we find
\begin{equation}
    \Sigma_N = \sqrt{2}\erf^{-1}\qty(1- \frac{1}{N}) \ .
\end{equation}
We again consider usual dataset sizes to be in the range $N \in [10^2, 10^5]$ and find
\begin{equation}
    \begin{cases}
        & N = 10^2 \\
        & N = 10^3 \\
        & N = 10^4 \\
        & N = 10^5 
    \end{cases}
    \implies 
    \begin{cases}
        &\Sigma \approx 2.58 \\
        &\Sigma \approx 3.29 \\
        &\Sigma \approx 3.89 \\
        &\Sigma \approx 4.42
    \end{cases}
    \implies
    \begin{cases}
        &p \approx 0.290 \\
        &p \approx 0.248 \\
        &p \approx 0.222 \\
        &p \approx 0.204
    \end{cases} \ .
\end{equation}
Hence, for any dataset with a number of data points in the common range following a truncated normal, the generalized principles can be expected to fall approximately between $p \in [0.20, 0.29]$, again with sample variance kept in mind. We could therefore expect generalized Pareto principles close to the original 0.2/0.8-principle to be common also in settings where the dataset is well approximated by a truncated normal.

Together with the exponential case, this helps to explain why empirical Pareto-like statements often cluster around values similar to 0.2/0.8, even though the exact canonical ratio has no mathematically special status.

In the language of the Kolkata index, the above results for the exponential predict a range of $k_F \in [0.74, 0.85]$ while the normal predicts $k_F \in [0.71, 0.80]$ across typical sample sizes. In \cite{Ghosh:2021} it was conjectured, based on citation data and a polynomial parametrization of the Lorenz curve, that the Kolkata index saturates near $k_F \approx 0.86$, proposing this as a possible universal social constant. 

The present predictions for exponential and normal families sit strictly below this conjectured saturation across the full range of $N$ considered. However as we saw, power law distributions can produce substantially smaller $p^*$, that is, larger $k_F$; in the tail-heavy regime some of the cases in Table~\ref{tab:powerlaws} approach or exceed the conjectured saturation. These observations suggest that the $0.86$ saturation could be characteristic only of heavy-tailed distributions, whereas the exponential and normal families at typical sample sizes represent a less extreme baseline.

\subsection{Discussion}
\label{sec:socialimplications}

The framework, its general existence results, and the distribution-specific analyses above suggest that the (generalized) Pareto principle may primarily be a structural property of gain distributions, rather than a rare empirical finding. The principle is frequently used to motivate inequality policies, operational choices, and management heuristics. This raises the question: how should we interpret and use the principle if it is, to a significant extent, expected to arise structurally?

Since some $p/(1-p)$-type balance is guaranteed, the question becomes which distributional form underlies the data, and which value of $p$ can we expect and how to evaluate it. For normal and exponential distributions with realistic sample sizes, we have seen that $p$ often lies close to 0.2, whereas for power-laws the heavier tails naturally produce much smaller values of $p$, \textit{i.e.} more extreme concentration.

These distribution families have mostly well-understood generating mechanisms. Normal distributions are linked to the central limit theorem, while exponentials emerge when decay-like events have a constant probability of taking place. For power-laws, various dynamical mechanisms have been proposed; a prominent class is preferential attachment, formalizing the ``rich get richer''-effect, which naturally generates heavy tails and has been studied for example in networks \cite{Barabasi:1999, Newman:2005}.

In decision-based processes, a possible lower-level explanation for heavy-tailed outcomes is the prevalence of bounded rationality in the sense of Simon \cite{Simon:1955}. When agents face many alternatives, systematically favoring familiar or salient options (\textit{e.g.} known brands, heavily advertised products) is a natural way to reduce cognitive load. Such behavior can effectively implement preferential attachment: options that are already popular attract disproportionate attention, reinforcing heavy-tailed distributions of outcomes.

From the perspective of inequality and fairness, the mere presence of concentration is not by itself a meaningful diagnostic. One should be interested in which distribution is followed and what is an acceptable value of $p$. Internal dynamics of the distribution also play an important role: the generalized principle takes no stance on who occupies which fraction, and it is a highly non-trivial question whether inequality is stratified or allows for mobility and internal transitions.

The results also suggest that appealing to Pareto-type imbalances as a justification of inequality on efficiency grounds (``the high-output fraction deserves more because they produce more'') should be treated with care. Since some form of $p/(1-p)$-imbalance is guaranteed in bounded cumulative processes, the existence of such an imbalance alone cannot resolve normative questions about fairness. This descriptive-versus-prescriptive distinction is the foundational concern of welfare-economic inequality measurement \cite{Atkinson:1970, Sen:1973}, and the present commentary is in agreement with that tradition. Likewise in management and decision-making, the Pareto principle has been widely popularized as advice to ``focus on the vital few''. This can be useful when resources are limited and the underlying distribution is indeed highly skewed. However, the consequences of such heuristics depend sensitively on the actual distributional form: in some regimes, a strong focus on the top fraction may overlook structural causes of unequal outputs or under-utilize a short but heavy tail of contributors. 

Over-reliance on the 0.2/0.8-rule can also make diagnosis shortsighted; there is no mathematical reason to single out the canonical principle, since the specific value $p = 0.2$ plays no distinguished role in the formalism. Rather, it seems to be a historically salient approximate value that emerges quite naturally when typical truncations of common distributions are analyzed. As seen in Section \ref{sec:canonicalprinciples}, generalized principles near 0.2/0.8 indeed seem more likely than more extreme cases, but this does not mean one should treat this specific value as a target. 

A plausible cause for the endurance of the original Pareto principle is that the values of generalized principles satisfied by data seem to often come close to the 0.2/0.8-principle. Allowing a non-well-defined tolerance to the satisfied principle, matches close to the canonical $0.2/0.8$-principle are interpreted as perfect. In reality there is a continuous family of generalized principles, and it would be more enlightening to find the distributional form and the true value of $p$ that the data follows, instead of trying to force the canonical principle into any dataset.

\section{Conclusions}
\label{sec:conclusion}

We have formalized and studied the existence of the (generalized) Pareto principle or the $p/(1-p)$-principle. The principle was formalized in Section \ref{sec:formalization} with the decreasing rearrangement of a density function $\ell(t)$ describing the density of gains on the unit interval. This allowed us to define the satisfied generalized principle unambiguously as the maximal (symmetric) discrepancy of gain densities. Along the way, to arrive at a sensible formalization of the principle, necessary limitations on $\ell(t)$ were discussed and imposed.

After this, explicitly constructed and commonly encountered distribution families were studied in Section \ref{sec:examplesexistence}. Constructed examples illustrated the formal framework, the limitations proposed on $\ell(t)$, and allowed for more general arguments about the minimal level of inequality required to satisfy a specific $p/(1-p)$-principle. It was also shown that for density profiles encountered often in real datasets, \textit{a priori} many of the generalized principles could be satisfied by suitable parameter choices. Limitations of polynomial and power-law distributions were explicitly derived.

Finally, commonly encountered generalized principles were quantified, and the social implications of this framework and its results were discussed in Section \ref{sec:discussion}. For power-laws, the principles satisfied by a limited selection of datasets were below the original 0.2/0.8-principle. For the exponential and normal distributions, as a function of dataset size, the values in the ranges of satisfied generalized principles were considerably close to the canonical 0.2/0.8-principle. 

Multiple lines of generalization and further study lay open. First, a simple follow-up would be to study other commonly encountered (truncated) distributions like the log-normal and gamma distributions, and to map out their patterns of satisfying the generalized Pareto principle. This would contribute to the distribution of usual generalized principles encountered in datasets, allowing a sharper estimate of the most common generalized Pareto principle found in data. A more detailed investigation of the social implications also warrants further study.

One could also generalize the framework to include multi-dimensional data and study the generalized principles in such cases. For example, $\L$ could be defined by integrating an $\ell \in L^1([0,1]^n)$ on a (hyper)cube $[0,p]^n$, or perhaps more naturally on a radius-$p$ hypersphere. In this way, a 1-dimensional cumulative gain function would still be obtained, and the generalized principle could be studied similarly.

Time evolution of real distributions could also be modelled to understand how such evolution shows up in the observed generalized principles. This would shed light on the dynamics of the Pareto imbalances, where models like kinetic exchange models or diffusion models could be used as example cases. One could study how quickly a stable generalized principle is converged into, or whether major fluctuations in such balances exist. We relegate such questions to future work.

\section*{Acknowledgements}

The author thanks Mika Pantzar, Niko Jokela, and Miika Sarkkinen for help and valuable discussions during the preparation of this work.

\appendix

\section{On the limitations of \texorpdfstring{$\ell(t)$}{l(t)}}
\label{app:limitationsoff}

Let us discuss further the reasons for imposing various limitations on $\ell$, as well as the possibility of lifting them under given circumstances.

\subsection{\texorpdfstring{$\ell \in L^1([0,1])$}{l in L1([0,1])} and differentiability of \texorpdfstring{$\L$}{L}}

We required $\ell \in L^1([0,1])$. Satisfying the existence property \ref{rmrk:remark1} requires $\L$ to be continuous, so $\ell \in L^1([0,1])$ is sufficient. However, $L^1([0,1])$ also allows for other desirable properties to exist: First, $\ell$ could in principle possess integrable singularities. For example, in the case of learning over time, such singularities could model ``Eureka moments'' or limiting processes of such, resulting in a divergent momentary gain rate. Second, it is natural to allow $\ell(t)$ to be discontinuous. For example, the existence of a step function gain density profile could result from a limiting process of quickly varying gains. As discussed in the section below, removing padding by zeros can naturally result in discontinuous gain densities.

Hence, we only require $\ell \in L^1([0,1])$ and cannot assume $\L$ to be differentiable, rendering it secondary to the discussion; we can unambiguously define $\L(t)$ in terms of $\ell(t)$, but not the other way around. As gain densities are not defined distributionally, cumulative gain profiles are continuous, and a natural class for them is the class of (absolutely) continuous functions, $C^0_{a}([0,1])$. If we instead assume $\ell$ to be continuous, as is the case with the distribution families in Section \ref{sec:examplesexistence}, then $\L$ is at least once differentiable.

\subsection{Padding by zeros}

Another limitation we must impose is on the length of intervals where $\ell(t) = 0$, that is, on the support of $\ell(t)$. A continuous family of generalized principles is trivial to satisfy if one allows such ``padding by zeros''. The possibility of padding by zeros does seem natural, since there can for example be periods of various lengths when no gains are achieved. However, the argument below shows why such periods must be removed from the density. 

Consider that we want to satisfy the generalized principle with a given $p$ and that there exists a $p' > p$ such that
\begin{equation}
    \int_0^{p'} \dd s \ \ell(s) = 1 - p \ .
\end{equation}
Then, scale (``squeeze'') $\ell(t)$ and pad it with zeros for the rest of the interval: 
\begin{equation}
    \Tilde{\ell}(t) = 
    \begin{cases}
        &\ell\qty(\frac{t}{\K}) \ , \ 0 < \K = p/p' < 1 \\
        &0\ , \ \text{otherwise.} 
    \end{cases} 
\end{equation}
Since $\ell(t)$ is normalized, we need to normalize $\Tilde{\ell}$:
\begin{equation}
    \int_0^1 \dd t \ \Tilde{\ell}\qty(t) = \int_0^\K \dd t \ \ell\qty(\frac{t}{\K}) = \K \int_0^1 \dd u \ \ell(u) = \K  \ .
\end{equation}
Normalization does not affect the ratio of densities in different intervals. Therefore, by padding with zeros, we satisfy the originally sought generalized principle, since now
\begin{equation}
    \frac{1}{\K}\int_0^p \dd t \ \Tilde{\ell}(t) =\frac{1}{\K} \int_0^p \dd t \ \ell\qty(\frac{t}{\K}) = \frac{1}{\K} \K \int_0^{p / \K = p'} \dd  u \ \ell(u) = 1 - p \ .
\end{equation}

Thus, we must limit our attention to gain density functions $\ell(t)$ which have essential support for the full interval $t \in [0,1]$ -- that is, the gain densities cannot equal zero on any finite-length intervals. Formally, $\mu\qty(\{\ell(t) = 0\} \mid t \in [0,1]) = 0$.

\subsection{Negative gains}

In principle, one could consider the possibility of negative gains as well. The question seems to be mostly semantic; take as an example the case of learning: is forgetting something you have learned ``negative learning", or is ``forgetting'' a separate process from learning? Conversely, is obtaining such forgotten information ``learning again'', or rather ``recalling'', a complementary process to forgetting?

Whatever the resolution of such questions, we focus only on non-negative gain densities. Allowing negative values entails accepting the (generalized) Pareto principle in a manner which does not resonate with the intuitive perception of imbalances. If negative rates were allowed, we should also extend the range of $\L(t)$ from $[0,1]$ to $\Rb$. With a decreasing rearrangement performed for $\ell$, there would still exist a unique generalized principle satisfied by such a profile.

\section{Trivialized existence of the generalized principle}
\label{app:rearrangementsproof}

Here we prove Theorem \ref{thm:theorem2} stating, that allowing integration over a countable disjoint union of intervals, we satisfy a family of generalized principles between two well-defined maximal values.

\begin{theorem*}
    For a gain density $\ell \in L^1([0,1])$, allowing the generalized Pareto principle to be satisfied over any countable disjoint union of intervals $\bigcup_{j \in \J} I_j, \ \forall \ j \ I_j \subset [0,1]$, is equivalent to fixing an interval of integration of length $\sum_j \mu(I_j)$ and allowing arbitrary rearrangements of $\ell$ into some $\ell^*$, which is equimeasurable with $\ell$. This also results in the rearranged gain function $\L^*$. \\

    \noindent With such rearrangements allowed, the minimal $p$ for satisfying the generalized principle is obtained by integrating $\ell^*$ on an interval $[0,p^*]$ such that $\L^*(p^*) = 1- p^*$. By symmetry, the maximal value of $p$ satisfying the generalized principle is $1 - p^*$. \\

    \noindent Then, every generalized Pareto principle between $p^*$ and $(1-p^*)$ can be satisfied by some rearrangement.
\end{theorem*}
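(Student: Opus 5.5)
The plan has three parts, matching the three claims of the statement.

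\emph{Step 1: the equivalence.} Let $E=\bigcup_{j\in\J}I_j$ with $\mu(E)=p$. I will construct a measure-preserving rearrangement $\sigma:[0,1]\to[0,1]$ that maps $E$ onto $[0,p]$ and $[0,1]\setminus E$ onto $[p,1]$, up to null sets. Concretely, set $\sigma(t)=\mu(E\cap[0,t])$ for $t\in E$ and $\sigma(t)=p+\mu(E^c\cap[0,t])$ for $t\notin E$. Each branch is nondecreasing and $1$-Lipschitz, and it pushes Lebesgue measure on $E$ (respectively $E^c$) forward to Lebesgue measure on $[0,p]$ (respectively $[p,1]$). Define $\tilde\ell$ by $\tilde\ell(\sigma(t))=\ell(t)$ a.e. Then $\tilde\ell$ is equimeasurable with $\ell$, and $\int_E\ell=\int_0^p\tilde\ell$.

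For the converse, an arbitrary equimeasurable $\tilde\ell$ generally cannot be written as $\ell\circ\sigma$ with $\sigma$ an interval permutation. I will therefore only claim the equivalence at the level of the attainable values $\{\int_E\ell\}$ versus $\{\int_0^p\tilde\ell\}$. By the Hardy--Littlewood rearrangement inequality, $\int_0^p\tilde\ell\le\int_0^p\ell^*$ for every equimeasurable $\tilde\ell$. The missing inequalities will follow from Step 3, since every intermediate value is already attained by some union of intervals. I will also need to approximate level sets $\{\ell>s\}$ by countable unions of intervals, using outer regularity of Lebesgue measure. This is how I reconcile ``countable unions of intervals'' with arbitrary measurable sets.

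\emph{Step 2: extremal $p$.} Define $F(p)=\sup_{\mu(E)=p}\int_E\ell$. By Hardy--Littlewood, $F(p)=\L^*(p)$, the maximum being approached by (approximations of) the set $\{\ell^*\text{-top level}\}$. Because $\ell^*$ is decreasing, $\L^*$ is concave and continuous. Hence $\L^*(p)+p-1$ is strictly increasing: it has slope $\ell^*+1>0$ a.e. So it has a unique root $p^*$ (existence by Remark~\ref{rmrk:remark1}). For $p<p^*$ we get $\int_E\ell\le\L^*(p)<1-p$, so no set of measure $p$ achieves $1-p$, and $p^*$ is therefore minimal. The infimum analogue is $G(p)=\inf_{\mu(E)=p}\int_E\ell=1-\L^*(1-p)$, obtained by taking complements. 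Solving $G(p)=1-p$ gives $\L^*(1-p)=p$, i.e.\ $q=1-p$ satisfies $\L^*(q)=1-q$, so $q=p^*$ and the maximal $p$ is $1-p^*$.

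\emph{Step 3: intermediate values.} This is the main obstacle. Fix $p\in[p^*,1-p^*]$. I need some $E$ with $\mu(E)=p$ and $\int_E\ell=1-p$. The target satisfies $G(p)\le1-p\le F(p)$: monotonicity gives $F(p)\ge 1-p$ for $p\ge p^*$, and the same holds for $G$ symmetrically. I will interpolate between the two extremal sets, working in the rearranged picture. Consider the windows $E_s=[s,s+p]$ on $\ell^*$ for $s\in[0,1-p]$. The map $s\mapsto\int_s^{s+p}\ell^*$ is continuous because $\ell^*\in L^1$. It equals $F(p)$ at $s=0$ and $G(p)$ at $s=1-p$, so the intermediate value theorem gives an $s$ with value exactly $1-p$.

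Finally I pull the window back to $\ell$. Equimeasurability gives a measure-preserving map $\tau$ with $\ell^*=\ell\circ\tau$ a.e.; this is Ryff's theorem, which I would cite. The preimage $\tau^{-1}([s,s+p])$ is measurable with measure $p$, but it is not necessarily a countable union of closed intervals. That is exactly why the statement phrases things via rearrangements, so I will present the theorem's ``satisfied by some rearrangement'' at this level. Where intervals are insisted upon, I will approximate the preimage by a finite union of intervals and then adjust the length and integral using continuity of $\L$, re-applying the IVT on a small added interval. The delicate point is doing this adjustment exactly rather than approximately.
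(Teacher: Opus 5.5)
Up to the last step, your proof follows the paper's. Hardy--Littlewood identifies $\L^*(p)$ as the extremal value, strict monotonicity of $\L^*(p)+p-1$ gives a unique $p^*$, and the window $s\mapsto\int_s^{s+p}\ell^*$ with the intermediate value theorem gives the intermediate principles. You are more careful than the paper in three places: you construct $\sigma$ explicitly, you notice that the converse holds only at the level of attained values (the paper's corresponding paragraph again starts from a set $I$), and you verify minimality and maximality through $F$ and $G$. One slip: Ryff's theorem gives $\ell=\ell^*\circ\sigma$ with $\sigma$ measure-preserving, not $\ell^*=\ell\circ\tau$. The latter can fail, e.g.\ for a tent function, which is two-to-one while its decreasing rearrangement is injective. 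With the correct direction, $A=\sigma^{-1}([s,s+p])$ has $\mu(A)=p$ and $\int_A\ell=\int_s^{s+p}\ell^*$, which is what your ``preimage'' step needs.

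What fails is the planned upgrade to countable unions of intervals, and with it your value-level equivalence for interval unions. At $p=p^*$ the upgrade is impossible in general. If $s^*=\ell^*(p^*)$ and $\mu(\{\ell=s^*\})=0$, the equality case of Hardy--Littlewood (the bathtub principle) forces every $E$ with $\mu(E)=p^*$ and $\int_E\ell=1-p^*$ to equal $B=\{\ell>s^*\}$ a.e. On the other hand, any countable union of intervals equals an open set a.e. For a concrete case, let $\ell=3$ on a fat Cantor set $C$ with $\mu(C)=1/10$. On each gap of $C$, let $\ell=\tfrac{1}{10}+\tfrac{61}{45}\,\tau$, where $\tau$ is the tent of height $1$ vanishing at the gap's endpoints; this normalizes $\ell$. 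Since $\L^*(1/10)=3/10<9/10$, $s^*$ lies strictly inside the range of $\ell$ on the gaps, where level sets are countable. Every neighbourhood of a point of $C$ contains a gap endpoint next to which $\ell<s^*$ on a set of positive measure. So an open set equal to $B$ a.e.\ would have to miss $C$, contradicting $\mu(C)>0$. Hence for interval unions $p^*$ is only an infimum, not attained.

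For $p^*<p<1-p^*$ exactness does hold, but one IVT on one added interval cannot enforce both $\mu(E)=p$ and $\int_E\ell=1-p$. Instead, use $G(p)<1-p<F(p)$ (strict) to pick finite unions of intervals $E_0,E_1$ of measure exactly $p$ with $\int_{E_0}\ell<1-p<\int_{E_1}\ell$. Let $E_\theta=(E_0\setminus R_\theta)\cup S_\theta$, where $R_\theta$ and $S_\theta$ are the leftmost parts of measure $\theta$ of $E_0\setminus E_1$ and $E_1\setminus E_0$. Then $\mu(E_\theta)=p$ throughout and $\int_{E_\theta}\ell$ is continuous in $\theta$, so the IVT applies. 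The clean fix is to state the theorem for measurable sets, as the paper's proof tacitly does when it replaces $\bigcup_j I_j$ by an arbitrary measurable $I$. With the Ryff direction corrected, your argument then goes through.
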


\noindent \textbf{Proof: }Let $([0,1], \mu)$ be the unit interval with the standard Lebesgue measure, and let $\ell$ be the normalized, non-negative gain density, $\ell \in L^1([0,1])$. For a given $p \in (0, 1/2]$ let $\I_p$ denote the collection of measurable sets such that for all $ I, I \subset [0,1]$ and $\mu(I) = p$. Any $I \in \I_p$ can be approximated arbitrarily well by a countable collection of open sets.

First, allow the generalized principle to be satisfied over an arbitrary countable disjoint union of intervals and assume there exists a measurable set $I \subset [0,1]$, $\mu(I) = p$ such that $\int_I \dd \mu \ \ell = 1 - p$. Since the measure is Lebesgue, it is non-atomic, and there exists a measure-preserving bijection $\tau$ with $\tau([0,p]) = I$ \cite{Halmos:1974}. By a change of variables under measure-preserving maps, we may write
\begin{equation}
    \int_0^p \dd \mu \ \qty(\ell \circ \tau^{-1}) = \int_{\tau([0,p])} \dd \mu \ \ell = \int_I \dd \mu \ \ell = 1- p \ ,
\end{equation}
and therefore we can fix the interval of integration using the rearranged function $\ell^* = \ell \circ \tau^{-1}$.

Next, assume that we want to fix the interval of integration (without loss of generality to $[0,p]$) and that we may perform rearrangements of $\ell$, which satisfies for some measurable set $I \in \I_p$
\begin{equation}
    \int_I \dd \mu  \ \ell = 1 - p  \ .    
\end{equation}
Again by the existence of a measure-preserving bijection $\tau : [0,1] \to [0,1]$ with $\mu(\tau^{-1} I) = \mu(I)$, we construct $\tau$ \cite{Kechris:1995} such that $\tau^{-1}(I) = [0,p]$ and find
\begin{equation}
    \int_I \dd \mu \ \ell = \int_{\tau([0,p])} \dd \mu \ \ell = \int_0^p \dd \mu \ \ell \circ \tau^{-1} = 1 - p \ .
\end{equation}

Next, prove the existence and uniqueness of a minimal value $p^*$ with which the generalized principle can be satisfied for a given distribution $\ell$. 

A decreasing rearrangement of $\ell$, $\ell^* : [0,1] \to [0, \infty]$ is the nonincreasing function that is equimeasurable with $\ell$, that is, for all $s \in \Rb, \ \mu(\{\ell > s\}) = \mu(\{\ell^* > s\})$ \cite{Lieb:2001}. By the Hardy-Littlewood inequality, for all $I \in \I_p$, 
\begin{equation}
    \int_I \dd \mu \ \ell  \  \leq \int_0^p \dd s \  \ell^*(s) \ .
\end{equation}
By the first part, we may work directly with the decreasingly rearranged distribution $\ell^*$. 

By \ref{rmrk:remark1}, for any $\ell$, some generalized principle is satisfied. By symmetry, finding a minimal $p^*$ is equivalent to finding a maximal $(1-p^*)$. By the Hardy-Littlewood theorem, a maximal $1-p^*$ satisfying the other half of the principle can be found by integrating on the interval $[0, p^*]$ such that $\int_0^{p^*} \dd s \ \ell^*(s) = 1 - p^*$. Hence, such a maximal $1 - p^*$ exists and by symmetry, the corresponding minimal $p^*$ exists, defined by this condition. Since $\ell^*$ is monotonic, $p^*$ is unique. \\

Finally, we want to prove that not just the limiting generalized principles with $p^*$ and $1-p^*$, but also all generalized principles with $p \in (p^*, 1-p^*)$ can be satisfied by a suitable rearrangement. The proof is again by the intermediate value theorem. By the second part, there exists a unique minimal $p^*$ with which the principle is satisfied, found from the decreasing rearrangement $\ell^*$. The above claim is trivial with $p^* = \frac{1}{2}$, so assume this to not be the case. Hence, we have to show that the generalized principle can be satisfied for all $p$ with $p^* < p < \frac{1}{2}$. 

Assume $p^* < p < \frac{1}{2}$. Since $\ell^*$ is decreasing and positive, 
\begin{equation}
        \int_0^p \dd t \ \ell^*(t) = p_1 \geq 1 - p^* \ , \qquad \int_{1-p}^1 \dd t \ \ell^*(t)= p_2 \leq p^* \ .
\end{equation}
Hence, we find that integrating over an interval of length $p$ will result in a total mass 
\begin{equation}
    p_1 \geq 1-p^* > p^* \geq p_2 \ .
\end{equation}
Since $p > p^*$, we would like to find an interval with mass $1-p$ such that $1-p^* > 1 - p > p^*$. Define the function evaluating the total mass inside an interval of length $p$ with
\begin{equation}
    M(s) = \int_s^{s + p} \dd t \ \ell^*(t) \ ,
\end{equation}
with $s \in [0, 1-p]$. $M$ is continuous and
\begin{equation}
    \begin{split}
        M(0) = p_1\ , \qquad M(1-p) = p_2 \ .
    \end{split}
\end{equation}
By the intermediate value theorem, there exists some $s \in [0, 1-p]$ at which $M$ obtains the value $1-p$, since $p_1 \geq 1 - p \geq p_2.$ Hence, all generalized principles between $p^*$ and $1-p^*$ may be satisfied if integration over countable disjoint unions is allowed. \ $\square$

\bibliography{biblio.bib}

\end{document}